\newcommand{\full}{}
\DeclareMathOperator{\E}{\mathbb{E}}
\renewcommand{\Pr}{\mathbb{P}}
\DeclareMathOperator{\Var}{\mathrm{Var}}
\DeclareMathOperator{\Cor}{\mathrm{Cor}}
\DeclareMathOperator{\Cut}{\mathrm{Cut}}
\DeclareMathOperator{\diag}{\mathrm{diag}}
\DeclareMathOperator{\Vol}{\mathrm{Vol}}
\newtheorem*{proposition*}{Proposition}
\newtheorem{definition}{Definition}
\newcommand\numberthis{\addtocounter{equation}{1}\tag{\theequation}}
\newcommand{\e}{\mu}    
\renewcommand{\c}{\rho}         
\newcommand{\s}{\sigma}         
\newcommand{\model}{VIO}
\DeclareMathOperator*{\argmin}{arg\,min}
\newcommand{\citet}[1]{\citeauthor{#1} \shortcite{#1}}
\begin{document}
    \ifdefined\full
    \title{Partitioned Sampling of Public Opinions \\
        Based on Their Social Dynamics\protect\thanks{\noindent This work was supported in part by the National Basic Research Program of China Grant 2011CBA00300, 2011CBA00301, 
            the National Natural Science Foundation of China Grant 61033001, 61361136003, 61433014.}}
    \else
    \title{Partitioned Sampling of Public Opinions \\
        Based on Their Social Dynamics\protect\thanks{\noindent This work was supported in part by the National Basic Research Program of China Grant 2011CBA00300, 2011CBA00301, 
            the National Natural Science Foundation of China Grant 61033001, 61361136003, 61433014.}}
    \fi
    \author{Weiran Huang\\
        IIIS, Tsinghua University\\
        Beijing, China\\
        \textsf{huang.inbox@outlook.com}\\
        \And Liang Li\\
        AI Department, Ant Financial Group\\
        Hangzhou, Zhejiang, China\\
        \textsf{liangli.ll@alipay.com}\\
        \And Wei Chen\\
        Microsoft Research\\
        Beijing, China\\
        \textsf{weic@microsoft.com}\\
    }
    
    \maketitle
    \begin{abstract}
        Public opinion polling is usually done by random sampling from the entire population, 
        treating individual opinions as independent.
        In the real world, individuals' opinions are often correlated, e.g., among friends in a social network.
        In this paper, we explore the idea of partitioned sampling, 
        which partitions individuals with high opinion similarities into groups
        and then samples every group separately to obtain an accurate estimate of the population opinion.
        We rigorously formulate the above idea as an optimization problem.
        We then show that the simple partitions which contain only one sample in each group are 
        always better, and 
        reduce finding the optimal simple partition to a well-studied Min-$r$-Partition problem.
        We adapt an approximation algorithm and a heuristic 
        to solve the optimization problem.
        Moreover, to obtain opinion similarity efficiently,
        we adapt a well-known opinion evolution model to characterize social interactions, 
        and provide an exact computation of opinion
        similarities based on the model.
        We use both synthetic and real-world datasets to demonstrate that 
        the partitioned sampling method results in significant improvement in sampling quality
        and it is robust when some opinion similarities are inaccurate or even missing.
    \end{abstract}
    
    \section{Introduction}
    
    Public opinion is essential nowadays for governments, organizations and companies to make decisions on their policies, strategies, products, etc.
    The most common way to collect public opinions is polling, typically
    done by randomly sampling a large number of individuals from the entire population and then interviewing them by telephone.
    This naive method is unbiased, but conducting interviews is very costly.
    On the other hand, in recent years, more and more online social media data are available and have been used to predict public opinions on certain issues.
    Such predictions cost less human effort, but they are usually biased
    and may lead to incorrect decisions.
    Thus, keeping the estimation unbiased while saving the cost becomes an important task to pursue.
    
    In this paper, we utilize individuals' social interactions
    (potentially learned from social media data) 
    to improve the unbiased sampling method.
    Our motivation is from the fact that 
    people's opinions are often correlated, especially among friends in a social network, 
    due to their social interactions in terms of the homophily and influence effects \cite{Homophily,Duncan,Kleinberg}.
    Such correlations are partially known in the big data era.
    For example, many online social media and networking sites provide publicly available social interaction data and user's sentiment data, and companies also have large amounts of data about their customers' preferences and their social interactions. 
    Our idea is to partition individuals into different groups by utilizing the above prior knowledge,
    such that people within a group are likely to hold the same opinions.
    We can then sample very few people in each group and aggregate the sampling results together to achieve an accurate estimation.
    We call this the {\em partitioned sampling} method.
    
    We formulate the above idea as an optimization problem.
    In particular, we first 
    characterize individuals' opinions as random variables. 
    We then specify our objective as minimizing
    the expected sample variance of the estimate, and
    define the statistical measure of pairwise {\em opinion similarity} as the input.
    Our analysis later shows that this input is enough to fully determine the solution
    of the optimization problem, named the
    {\em Optimal Partitioned Sampling (OPS)} problem (Section \ref{section: formulating the problem}).
    
    We solve the OPS problem in two steps (Section \ref{section: solving the OPS problem}).
    First, we show that 
    the best partition is always a {\em simple partition}, 
    meaning that each group contains only one sample.
    Second, we use people's opinion similarities to construct a weighted graph and 
    reduce the OPS problem to the Min-$r$-Partition problem.
    We adapt a semi-definite programming algorithm and a heuristic algorithm
    to solve the optimization problem.
    We further show that partitioned sampling using any balanced simple partition where group sizes are the same
    always out-performs naive sampling method, and thus balanced simple partition is always safe to use even if we only have
    partial or inaccurate opinion similarity information.
    
    Next, we adapt existing opinion evolution models and propose the Voter model with Innate Opinions (VIO) based
    on social network interactions (Section \ref{section: opinion evolution model and opinion similarity}).
    We provide an exact computation of opinion similarities in the steady state of
    the model, which is novel in the study of such models.
    
    Finally, we conduct experiments on both synthetic and real-world datasets
    to demonstrate the effectiveness and robustness of our partitioned sampling method
    (Section \ref{section: experimental evaluation}).
    
    In summary, our contributions include: 
    (a) proposing the partitioned sampling method to
    improve sampling quality based on opinion similarities
    and formulating it as an optimization problem, 
    (b) precisely connecting the OPS problem to
    the Min-$r$-Partition problem and providing efficient algorithms 
    for the OPS problem,
    and 
    (c) adapting an opinion evolution model and providing
    an exact computation of opinion similarities based on the model.
    
    \ifdefined\full
    \else
    Due to space constraints, all proofs and further technical details are included in
    the full report \cite{HLC15}.
    \fi
    
    \textbf{Related Work.}
    There are many sampling methods in the literature.
    The most related method is stratified sampling
    \cite{bethel1986optimum,bethel1989sample,chromy1987design,cochran2007sampling,kozak2007modern,keskinturk2007genetic,ballin2013joint}.
    The entire population is first stratified into homogeneous atomic strata based on individuals' profiles (e.g., age, gender, etc.),
    and then they may be combined to a final stratification and 
    subsample size in each stratum is allocated to minimize sample variance.
    Conceptually, our partitioned sampling method is similar to stratified sampling, but there are some important differences.
    First, stratified sampling partitions individuals based on their profiles, which may not imply opinion similarity, while we partition individuals
    directly based on opinion similarity, and thus our method is more accurate and flexible.
    Second, the technical treatments are different. Stratified sampling treats individual opinions as fixed and unknown, 
    and requires the (estimated) mean and standard deviation of opinions in each stratum to bootstrap the
    stratified sampling, while we treat individual opinions as random variables, and use pairwise opinion 
    similarities for partitioned sampling.
    
    Among studies on social interaction based sampling, \citet{SocialSampling} utilize social network connections to facilitate sampling.
    However, their method is to ask the voter being sampled to return the estimate of her friends' opinions,
    which changes the polling practice. 
    In contrast, we still follow the standard polling practice and only use implicit knowledge on opinion similarities to improve sampling quality.
    \citet{DebiasingSocial} consider the task of estimating 
    people's average innate opinion by removing their social interactions,
    which is opposite to our task --- we want to utilize opinion interactions for more efficient sampling of final expressed opinions which are counted in opinion polls.
    Graph sampling methods \cite{gjoka2010walking,kurant2011walking} aim at achieving unbiased uniform sampling on
    large scale networks when the full network is not available, which is orthogonal to our partitioned sampling
    approach and could be potentially combined. 
    
    Various opinion evolution models have been proposed in the literature
    \cite{DiscreteOpinion,DebiasingSocial,OpinionMaximization,Li0WZ15}.
    Our \model{} model is adapted from the voter model \cite{VoterModel} and its extension with stubborn agents \cite{DiscreteOpinion}.
    
    Graph partitioning has been well studied, and numerous problem variants and algorithms exist.
    In this paper, we reduce the OPS problem to the Min-$r$-Partition problem, 
    which was first formulated by \citet{sahni1976p}.
    To the best of our knowledge, there is no approximation or heuristic algorithms for Min-$r$-Partition.
    Thus, we adapt a state-of-art approximation algorithm for the dual problem (Max-$r$-Cut) to solve the OPS problem \cite{maxkcut}.
    We also propose a greedy algorithm for large graphs, which takes the idea from a heuristic algorithm for Max-$r$-Cut \cite{zhu2013max}.
    
    \section{Formulating the OPS Problem}\label{section: formulating the problem}
    
    We consider a vertex set $V$ from a social network graph 
    containing $n$ vertices (or nodes) $v_1, v_2, \dots, v_n$.
    Each vertex represents a person in the social network, 
    and has a binary opinion on some topic of interest. 
    Our task is to estimate the average opinion of all individuals in the social network with sample size budget $r$. 
    Let $f:V\rightarrow\{0, 1\}$ denote the opinion function, 
    i.e., we wish to estimate the fraction $\bar f = \frac{1}{n} \sum_{i=1}^n f(v_i)$.
    The {\em naive sampling} method simply picks $r$ nodes uniformly at random with replacement 
    from $V$ to ask their opinions 
    and takes the average of sampled opinions as the estimate, as denoted below:
    $\hat f_{\it naive} (V,r) = \frac{1}{r} \sum_{i=1}^r f(x_i)$, where $x_i$ is the $i$-th sampled node.
    
    In this paper, we propose a general sampling framework called {\em partitioned sampling}.
    Formally, we first partition the whole vertex set into several disjoint subsets (called {\em groups}), 
    and then allocate subsample size of each group.
    We use $\mathcal{P}=\{(V_1,r_1),$ $(V_2,r_2), \dots, (V_K,r_K)\}$ to represent such a partition, 
    where $V_1$, $V_2$, \dots, $V_K$ are groups, 
    and $r_k$ is the subsample size of group $V_k$.
    Next, we do naive sampling inside each group $V_k$ with its subsample size $r_k$.
    Finally, we estimate the average opinion of the population
    by taking a weighted average of all subsampling results, 
    with weights proportional to group sizes:  
    $\hat f_{\it part}({\cal P}) =  \sum_{k=1}^K \frac{|V_k|}{|V|} \cdot \hat f_{\it naive} (V_k,r_k)$.
    Notice that naive sampling is a special case of partitioned sampling with $\mathcal{P}=\{(V,r)\}$.
    One can easily verify that partitioned sampling is unbiased 
    \ifdefined\full
    (see Appendix \ref{unbiased proof}).
    \else
    \cite{HLC15}.
    \fi
    
    Intuitively, the advantage of using partitioned sampling is that, 
    if we partition individuals such that people likely holding the same opinions are partitioned into the same group, 
    then we can sample very few people in each group to get an accurate estimate of the average opinion of the group, 
    and aggregate them to get a good estimate of population mean.
    To implement this idea, we assume that some prior knowledge about people's opinions and their
    similarities is available before sampling.
    Based on these knowledge, our goal is to find the best partition for partitioned sampling which achieves the best sampling quality.
    
    Our first research challenge is how to rigorously formulate the above intuition into an optimization problem.
    To meet this challenge, we need to answer (a) which objective function is the appropriate one for the optimization problem, 
    and (b) which representation of the prior knowledge about people's opinions and their similarities 
    can be used as the inputs to the optimization problem.
    
    We first address the objective function.
    When all individuals' opinions $f(v_1), f(v_2), \dots, f(v_n)$ are fixed (but unknown), 
    the effectiveness of an unbiased randomized sampling method is measured by the standard sample variance $\Var(\hat{f})$, 
    where $\hat{f}$ is the estimate.
    The smaller the sample variance, the better the sampling method.
    When the prior statistical knowledge about people's opinions is available, effectively we treat
    opinions $f(v_1), f(v_2), \dots, f(v_n)$ as random variables, and the prior knowledge is
    some statistics related to the joint distribution of these random variables.
    In this case, the best sampling method should minimize the {\em expected sample variance} $\E[\Var(\hat{f})]$, 
    where the expectation is taken over the randomness from the joint distribution of people's opinions.
    For clarity, we use $\E_M[\Var_S(\hat{f})]$ to represent $\E[\Var(\hat{f})]$, where
    subscript $M$ (standing for ``model'') represents the randomness from the
    joint distribution model of opinions, and subscript $S$ (standing for ``sampling'') represents sample randomness
    from the sampling method.
    \footnote{One may propose to use the total variance $\Var_{M,S}(\hat f)$ as the objective function. 
        \ifdefined\full
        In Appendix \ref{section:object discussion}, we show that they are equivalent for the optimization task.
        \else
        In the full report \cite{HLC15}, we show that they are equivalent for the optimization task.
        \fi}

    We now discuss the input to the optimization task. 
    The full joint distribution of $f(v_1), f(v_2), \dots, f(v_n)$ requires
    an exponential number of parameters and is infeasible as the input.
    Then notice that the objective function only involves first two moments, which suggests us to use the expectations and pairwise correlations of people's opinions 
    as the inputs.
    Indeed, we find that these knowledge is good enough to fully characterize the optimization problem. 
    However, we further discover that a weaker and more direct type of statistics would be 
    enough to enable the 
    optimization problem, which we formally define as pairwise opinion similarities:
    the {\em opinion similarity} $\sigma_{ij}$ for nodes $v_i$ and $v_j$ is defined as the probability that 
    $f(v_i)$ and $f(v_j)$ have the same values.
    
    With the objective function and inputs settled, we are now ready to define our optimization problem:
    
    \begin{definition}\label{def:OPS}
        (Optimal Partitioned Sampling) 
        Given a vertex set $V=\{v_1, v_2, \dots, v_n\}$, sample size budget $r<n$, and
        opinion similarity $\s_{ij}$ between every pair of nodes $v_i$ and $v_j$,
        the {\em Optimal Partitioned Sampling} (OPS) problem is to find the optimal partition $\mathcal{P}^*$ of $V$, 
        such that the
        partitioned sampling method 
        using $\mathcal{P}^*$ achieves the minimum 
        expected sample variance, i.e.,  $\mathcal{P}^* = \argmin_{\cal P} \E_M[\Var_S(\hat f_{\it part}({\cal P}))]$, where $\cal P$ takes among all partitions of
        $V$ with $r$ samples.
    \end{definition}
    
    We remark that the OPS problem requires all pairwise opinion similarities as inputs so as to make the problem well-defined.
    We will address the issue of handling missing or inaccurate opinion similarities 
    in Section \ref{section: discussion of similarity}, and show that partitioned sampling still has outstanding performance.
    
    \section{Solving the OPS Problem}\label{section: solving the OPS problem}
    
    There are two issues involved in the OPS problem: 
    one is how to partition the vertex set $V$ into $K$ groups; 
    the other is how to allocate the subsample size in each group.
    For simplifying the OPS problem, 
    we first consider a special kind of partitions that pick only one sample node in each group.
    
    \begin{definition}
        A {\em simple partition} is a partition in which the subsample size of each group is equal to one.
    \end{definition}
    
    Simple partitions are important not only for the simplicity but also for the superiority.
    We will later show in Theorem \ref{theorem: simple is better} that, for any non-simple partition $\cal P$,
    one can easily construct a simple partition based on $\cal P$ which is at least as good as $\cal P$.
    Thus, we focus on finding the optimal simple partition.
    
    Our approach is constructing a weighted assistant graph $G_a$ whose vertex set is $V$,
    where the weight of edge $(v_i,v_j)$ is $w_{ij}=1-\s_{ij}$,
    and then connecting the OPS problem with a graph partitioning problem for the graph $G_a$.
    For a simple partition $\mathcal{P}=\{(V_1,1), (V_2,1), \dots, (V_r,1)\}$ of $V$, 
    we use $\Vol_{G_a}(V_k)$ to denote the volume of the group $V_k$ in the graph $G_a$, 
    defined as $\Vol_{G_a}(V_k)=\sum_{v_i,v_j\in V_k} w_{ij}$.
    We define a cost function $g(\mathcal{P})$ to be the sum of all groups' volumes in $G_a$, 
    namely, $g(\mathcal{P})=\sum_{k=1}^r \Vol_{G_a}(V_k)$.
    Our major technical contribution is to show that
    minimizing the expected sample variance of partitioned sampling using any simple partition $\mathcal{P}$ 
    is equivalent to 
    minimizing the cost function $g(\mathcal{P})$, 
    as summarized by the following theorem:
    
    \begin{restatable}{theorem}{objective}\label{theorem:objective function}
        Given a vertex set $V$ with pairwise opinion similarities $\{\s_{ij}\}$'s and sample size $r$, 
        for any simple partition ${\cal P}=\{(V_1,1), (V_2,1),$ $ \dots, $ $(V_r,1)\}$ of $V$, 
        $$\E_M[\Var_S(\hat f_{\it part}({\cal P}))] =g({\mathcal{P}})/2|V|^2.$$ 
        Thus, the optimal simple partition of $V$ minimizes the cost function $g({\mathcal{P}})$.
    \end{restatable}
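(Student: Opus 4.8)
The plan is to compute $\E_M[\Var_S(\hat f_{\it part}({\cal P}))]$ directly and watch it collapse to the graph-volume sum. First I would unfold the estimator for a simple partition: since every subsample size $r_k=1$, naive sampling inside group $V_k$ just draws a single node $X_k$ uniformly at random from $V_k$, so that $\hat f_{\it part}({\cal P}) = \sum_{k=1}^r \frac{|V_k|}{|V|}\, f(X_k)$, where the draws $X_1,\dots,X_r$ are mutually independent over the sampling randomness $S$ because the groups are sampled separately.

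Next I would compute the inner sampling variance with the opinions $f(v_1),\dots,f(v_n)$ held fixed. Independence across groups kills all cross terms, giving $\Var_S(\hat f_{\it part}({\cal P})) = \sum_{k=1}^r \frac{|V_k|^2}{|V|^2}\,\Var_S(f(X_k))$. The crucial step is to rewrite each single-group variance via the elementary pairwise identity $\Var_S(f(X_k)) = \frac{1}{2|V_k|^2}\sum_{v_i,v_j\in V_k}(f(v_i)-f(v_j))^2$, which holds because $X_k$ is uniform on $V_k$. After substitution the factors $|V_k|^2$ cancel exactly, leaving $\Var_S(\hat f_{\it part}({\cal P})) = \frac{1}{2|V|^2}\sum_{k=1}^r\sum_{v_i,v_j\in V_k}(f(v_i)-f(v_j))^2$.

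Finally I would push $\E_M$ through this finite sum (justified by linearity of expectation) and exploit that opinions are binary: for $f(v_i),f(v_j)\in\{0,1\}$ we have $(f(v_i)-f(v_j))^2 = \mathbf{1}[f(v_i)\neq f(v_j)]$, so $\E_M[(f(v_i)-f(v_j))^2] = \Pr[f(v_i)\neq f(v_j)] = 1-\s_{ij} = w_{ij}$, with the diagonal terms vanishing since $\s_{ii}=1$. The double sum then becomes $\sum_{k=1}^r\sum_{v_i,v_j\in V_k} w_{ij} = \sum_{k=1}^r \Vol_{G_a}(V_k) = g({\cal P})$, which yields $\E_M[\Var_S(\hat f_{\it part}({\cal P}))] = g({\cal P})/2|V|^2$. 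The second sentence of the theorem is then immediate, because $2|V|^2$ is a positive constant independent of $\cal P$, so minimizing the expected sample variance is the same as minimizing $g({\cal P})$.

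I expect no serious obstacle here, as the whole argument is a short chain of identities. The one step worth stating carefully is the pairwise variance identity together with the cancellation of $|V_k|^2$, since this is exactly what makes the group sizes disappear and reduces the objective to a pure edge-weight (volume) quantity; everything else is linearity of expectation and the binary-variable observation that $\s_{ii}=1$ forces the diagonal contributions to be zero.
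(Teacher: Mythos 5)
Your proof is correct, and every step checks out: the independence of the single draws across groups, the cancellation of the $|V_k|^2$ factors, the pairwise identity $\Var_S(f(X_k))=\frac{1}{2|V_k|^2}\sum_{v_i,v_j\in V_k}(f(v_i)-f(v_j))^2$ for a uniform draw, and the observation that $\E_M[(f(v_i)-f(v_j))^2]=\Pr[f(v_i)\neq f(v_j)]=1-\s_{ij}=w_{ij}$ with vanishing diagonal. The overall skeleton matches the paper's proof (condition on $f$, decompose the sampling variance across the independent groups, then take $\E_M$), but your route to the similarity is different in a useful way. The paper expands $\Var_S(f(x_k))=\E_S[f(x_k)^2]-\E_S[f(x_k)]^2$, uses $f^2=f$, and then invokes the binary identity $\E[AB]=\frac{1}{2}\left(\Pr[A=B]+\E[A]+\E[B]-1\right)$ to rewrite $\E_M[f(v_i)f(v_j)]$; this temporarily introduces the first moments $\E_M[f(v_j)]$ with coefficient $(n_k-1)$, which must then be checked to cancel against the cross-term contributions before only the $1-\s_{ij}$ terms survive. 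Your pairwise-difference identity makes that cancellation automatic: the first moments never appear, and the similarity enters in a single step via $(f(v_i)-f(v_j))^2=\mathbf{1}[f(v_i)\neq f(v_j)]$. The two arguments are algebraically equivalent, but yours is slightly shorter and makes it more transparent why the answer depends only on the pairwise similarities and not on the marginal opinion probabilities.
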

    \begin{proof}[Proof (Sketch)]
        We use $x_k$ to denote the sample node selected in the $k$-th group $V_k$ of the simple partition $\mathcal{P}$. 
        The estimate of partitioned sampling with $\mathcal{P}$ can be written as
        $\hat f_{\it part}(\mathcal P) = \frac{1}{n}\sum_{k=1}^r n_k f(x_k)$,
        where $n=|V|$ and $n_k=|V_k|$.
        When $f$ is fixed, since $f(x_k)$'s are independent, then
        \begin{align*}
            &\Var_S (\hat f_{\it part}(\mathcal{P}))  = \frac{1}{n^2} \sum_{k=1}^r n_k^2 \cdot \Var_S\left[f(x_k)  \right ] \\
            & = \frac{1}{n^2} \sum_{k=1}^r n_k^2 \cdot (\E_S [f(x_k)^2] - \E_S [f(x_k)]^2).
        \end{align*}
        We then use the fact that $f(x_k)^2 = f(x_k)$ and $\E_S[f(x_k)] = \sum_{v_j \in V_k} f(v_j) / n_k$, and take expectation
        when $f$ is drawn from a distribution, to obtain
        \begin{align*}
            & \E_M[\Var_S (\hat f_{\it part}(\mathcal{P})) ] = \frac{1}{n^2} \sum_{k=1}^r \left( (n_k-1) \sum_{v_j \in V_k} \E_M[f(v_j)] \right.\\
            & \quad  \left. - \sum_{v_i,v_j \in V_k, v_i\ne v_j} \E_M [f(v_i)f(v_j)]  \right).
        \end{align*}
        Notice that for any two binary random variables $A$ and $B$, we have $\E[AB]=\frac{1}{2}\left(\Pr[A=B]+\E[A]+\E[B]-1\right)$.
        After applying this formula to $\E_M[f(v_i)f(v_j)]$ and simplifying the expression, we obtain the theorem.
    \end{proof}

    The intuition of the theorem is that, small cost function indicates small volume of each group, 
    which implies that the nodes within each group have high opinion similarities.
    Theorem \ref{theorem:objective function} makes precise our intuition that 
    grouping people with similar opinions 
    would make partitioned sampling more efficient.

    Theorem~\ref{theorem:objective function} provides the connection between 
    the OPS problem and the graph partitioning problem.
    In particular, it suggests that we can reduce the OPS
    problem to the following {\em Min-$r$-Partition} problem: given an undirected graph with non-negative edge weights, partition the graph into $r$ groups such that the sum
    of all groups' volumes is minimized.
    However, Min-$r$-Partition is NP-hard to approximate to within any finite factor \cite{kann1997hardness}, 
    and to the best of our knowledge, there is no approximation or heuristic algorithms in the literature.
    The good news is that Min-$r$-Partition and its dual problem (Max-$r$-Cut) are equivalent in the exact solution, 
    and there exist both approximation and heuristic algorithms for Max-$r$-Cut.
    \citet{maxkcut} 
    propose a semi-definite programming (SDP) algorithm which achieves
    $1-1/r + 2\ln r/r^2$ approximation ratio and is the best to date.
    We adopt the SDP algorithm to solve the OPS problem.
    The SDP partitioning algorithm including the SDP relaxation
    program is given 
    \ifdefined\full 
    in Appendix \ref{section:SDP algorithm}.
    \else
    in the full report \cite{HLC15}.
    \fi
    The drawback of the SDP partitioning algorithm is its inefficiency.
    Thus, we further propose a greedy algorithm to deal with larger graphs, which takes the idea from a heuristic algorithm for Max-$r$-Cut \cite{zhu2013max}.
    
    Given a simple partition $\mathcal{P} = \{(V_1,1), \ldots, (V_r,1)\}$
    and an external node $v_i$ which does not belong to $V_k$ for any $k\in[r]$,
    we define $\delta g_\ell(v_i,\mathcal{P})$ to be $g(\mathcal{P'})-g(\mathcal{P})$,
    where $\mathcal{P'}$ 
    is $\{(V_1,1), \ldots, (V_\ell\cup\{v_i\},1), \ldots, (V_r,1)\}$.
    Thus $\delta g_\ell(v_i,\mathcal{P})$ represents the increase of the cost function when the external node $v_i$ is added to the group $V_\ell$ of $\mathcal{P}$.
    The greedy algorithm (Algorithm \ref{al:greedy}) first assigns each ungrouped node $x_i$ to the group 
    such that the objective function $g(\mathcal{P})$ is increased the least.
    After the first round of greedy assignment, the assignment procedure is repeated to further
    decrease the cost function, until some stopping condition holds, such as 
    the decrease is smaller than a predetermined threshold.
    
    \begin{algorithm}[t]
        \caption{Greedy Partitioning Algorithm}\label{al:greedy}
        \begin{algorithmic}[1]
            \Require Graph $G_a$ with $n$ nodes, number of groups $r$. 
            \State Randomly generate a node sequence of all the nodes: $x_1$, $x_2$, \dots, $x_n$.
            \State Let $V_1=\ldots =V_r = \emptyset$. 
            \Repeat
            \For {$i\gets 1$ \textbf{to} $n$}
            \If{$x_i \in V_j$ for some $j\in [r]$}{ $V_j=V_j\setminus \{x_i\}$.}
            \EndIf
            \State $k \leftarrow \arg\min_{\ell\in [r]} \delta g_\ell(x_i, \{(V_1,1),\ldots, (V_r, 1)\})$
            \State $V_k \leftarrow V_k \cup \{x_i\}$.
            \EndFor
            \Until{a predetermined stopping condition holds.}
            \State \textbf{Output:}  Partition $\mathcal{P} = \{(V_1,1),\ldots, (V_r, 1)\}$.
        \end{algorithmic}
    \end{algorithm}
    
    The running time of one-round greedy assignment is $O(n+m)$ where $m$ is the number of edges in $G_a$. 
    In our experiment, we will show that greedy partitioning 
    performs as well as SDP partitioning but could run on much larger graphs.
    Theoretically, the performance of partitioned sampling using the simple partition generated by the greedy partitioning algorithm
    is always at least as good as naive sampling,
    even using the partition generated after the first round of greedy assignment, 
    as summarized below:
    
    \begin{restatable}{lemma}{greedy}\label{lemma:greedy vs naive}
        Given a vertex set $V$ with 
        sample size $r$, 
        partitioned sampling using the simple partition $\mathcal{P}$
        generated by the greedy partitioning algorithm (even after the first round)
        is at least as good as naive sampling. Specifically,
        \[
        \E_M[\Var_S(\hat f_{\it part}({\cal P}))] \leq 
        \E_M[\Var_S(\hat f_{\it naive}(V,r))].
        \]
    \end{restatable}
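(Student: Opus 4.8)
The plan is to combine Theorem~\ref{theorem:objective function} with a direct computation of the naive sampling variance, so that the whole claim reduces to a clean inequality on the cost function $g$. By the theorem, for the greedy simple partition $\mathcal{P}=\{(V_1,1),\dots,(V_r,1)\}$ we have $\E_M[\Var_S(\hat f_{\it part}(\mathcal{P}))]=g(\mathcal{P})/2n^2$ with $n=|V|$. Hence it suffices to establish two facts: (i) $\E_M[\Var_S(\hat f_{\it naive}(V,r))]=\Vol_{G_a}(V)/(2rn^2)$, where $\Vol_{G_a}(V)=\sum_{i,j}w_{ij}$ is the total volume of the assistant graph; and (ii) the greedy partition satisfies $g(\mathcal{P})\le \Vol_{G_a}(V)/r$. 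Together these give $g(\mathcal{P})/2n^2\le \Vol_{G_a}(V)/(2rn^2)=\E_M[\Var_S(\hat f_{\it naive}(V,r))]$, which is exactly the lemma.

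For (i), I would argue as in the proof of Theorem~\ref{theorem:objective function}. Since naive sampling draws $r$ i.i.d.\ uniform nodes with replacement and $f$ is binary, for fixed $f$ one gets $\Var_S(\hat f_{\it naive})=\bar f(1-\bar f)/r$ with $\bar f=\frac1n\sum_j f(v_j)$. Taking $\E_M$ and expanding $\bar f^2=\frac1{n^2}\sum_{i,j}f(v_i)f(v_j)$, I apply the binary identity $\E[AB]=\frac12(\Pr[A=B]+\E[A]+\E[B]-1)$ to each term $\E_M[f(v_i)f(v_j)]$, using $\Pr[f(v_i)=f(v_j)]=\sigma_{ij}$. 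The linear $\E_M[f(v_i)]$ contributions cancel against $\E_M[\bar f]$, leaving $\E_M[\bar f(1-\bar f)]=\frac12-\frac1{2n^2}\sum_{i,j}\sigma_{ij}=\frac1{2n^2}\sum_{i,j}(1-\sigma_{ij})=\Vol_{G_a}(V)/(2n^2)$, and dividing by $r$ yields (i).

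For (ii), the key is the averaging property of greedy together with symmetry of the weights $w_{ij}=1-\sigma_{ij}$. When the algorithm inserts a node $u$ into the current partition, the increase from placing it in group $V_\ell$ is $\delta g_\ell(u,\mathcal{P})=2\sum_{v\in V_\ell}w_{uv}$; since the groups partition the already-placed nodes, the average of $\delta g_\ell$ over the $r$ groups is $\frac2r\sum_{v\ \text{placed}}w_{uv}$, and the greedy choice is at most this average. Summing this per-insertion bound over the first-round order $x_1,\dots,x_n$ telescopes the placed-node sums into all unordered pairs, giving $g(\mathcal{P})\le \frac2r\sum_{i>j}w_{x_ix_j}=\frac1r\Vol_{G_a}(V)$ at the end of the first round. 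Finally I would check that every later reassignment is non-increasing: removing $u$ from its group and re-inserting it into $\argmin_\ell\delta g_\ell$ cannot raise $g$, because the group $u$ came from is itself one of the candidates. Thus the bound $g\le \Vol_{G_a}(V)/r$ persists for the final partition, establishing (ii).

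The main obstacle is (ii): one must set up the ``greedy $\le$ average'' inequality so that the per-insertion bounds telescope \emph{exactly} into $\frac1r\Vol_{G_a}(V)$, and separately confirm that the across-round reassignments never increase $g$, so that the first-round bound continues to hold. The computation in (i) is routine once the binary identity from Theorem~\ref{theorem:objective function} is reused.
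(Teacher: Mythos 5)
Your proposal is correct and follows essentially the same route as the paper: it computes $\E_M[\Var_S(\hat f_{\it naive})]=\sum_{i\neq j}(1-\sigma_{ij})/(2n^2r)$ via the binary identity, bounds each greedy insertion by the average increase $\tfrac{2}{r}\sum_{l<k}w_{s_ks_l}$ so that the first-round cost satisfies $g(\mathcal{P})\le\tfrac{1}{r}\sum_{i\neq j}(1-\sigma_{ij})$, and concludes via Theorem~\ref{theorem:objective function}. Your additional observation that later reassignment rounds can only decrease $g$ (since the node's current group is among the candidates) is a correct detail the paper leaves implicit.
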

    
    We call a partition $\mathcal{P}'$ a {\em refined} partition of $\mathcal{P}$, if each group
    of ${\cal P}'$ is a subset of some group of $\mathcal{P}$.
    Suppose we are given a partition $\mathcal{P}$ such that there exists some group which is allocated more than one sample. 
    Then we can further partition that group by the greedy partitioning algorithm
    and finally obtain a refined simple partition of $\mathcal{P}$.
    According to Lemma \ref{lemma:greedy vs naive}, 
    the refined simple partition should be at least as good as the original partition $\mathcal{P}$,
    summarized as below:
    
    \begin{restatable}{theorem}{simple}\label{theorem: simple is better}
        For any non-simple partition $\mathcal{P}$, there exists a refined simple partition $\mathcal{P}'$ of $\mathcal{P}$, 
        which can be constructed efficiently,
        such that partitioned sampling using the refined simple partition $\cal P'$ is 
        at least as good as partitioned sampling using the original partition $\cal P$. Specifically,
        \[
        \E_M[\Var_S(\hat f_{\it part}({\cal P}'))] \leq 
        \E_M[\Var_S(\hat f_{\it part}({\cal P}))].
        \]
    \end{restatable}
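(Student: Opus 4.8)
The plan is to reduce the claim to a per-group application of Lemma~\ref{lemma:greedy vs naive}. Write the given non-simple partition as $\mathcal{P} = \{(V_1,r_1),\dots,(V_K,r_K)\}$ with $\sum_k r_k = r$ and at least one $r_k > 1$. For each group with $r_k > 1$ I would run the greedy partitioning algorithm on the sub-vertex-set $V_k$ with sample budget $r_k$, obtaining a simple partition $\mathcal{P}_k$ of $V_k$ into $r_k$ singleton-sample groups; groups with $r_k = 1$ are left untouched. Collecting all these singleton-sample groups yields a partition $\mathcal{P}'$. Since every subgroup produced is a subset of its parent $V_k$, the partition $\mathcal{P}'$ is by definition a refined partition of $\mathcal{P}$, it is simple, and it uses $\sum_k r_k = r$ samples; moreover each greedy run costs $O(|V_k| + m_k)$, so $\mathcal{P}'$ is constructed efficiently.

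The crucial structural step is a variance decomposition across groups. Because partitioned sampling draws its samples independently in distinct groups, for any fixed opinion vector $f$ the cross-covariances between $\hat f_{\it naive}(V_k,r_k)$ for different $k$ vanish, so
$$\Var_S(\hat f_{\it part}(\mathcal{P})) = \sum_{k=1}^K \left(\frac{|V_k|}{|V|}\right)^2 \Var_S(\hat f_{\it naive}(V_k,r_k)).$$
Taking $\E_M$ and using linearity gives
$$\E_M[\Var_S(\hat f_{\it part}(\mathcal{P}))] = \sum_{k=1}^K \left(\frac{|V_k|}{|V|}\right)^2 \E_M[\Var_S(\hat f_{\it naive}(V_k,r_k))].$$
The same decomposition applies to $\mathcal{P}'$: grouping its singleton-sample subgroups by the parent $V_k$ they came from, the contribution of the subgroups of $V_k$ sums to $\frac{|V_k|}{|V|}\,\hat f_{\it part}(\mathcal{P}_k)$, where $\hat f_{\it part}(\mathcal{P}_k)$ denotes the partitioned-sampling estimate of the average opinion of $V_k$ using $\mathcal{P}_k$ (with the $V_k$-normalization absorbing into the global weight $|V_k|/|V|$). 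Independence across (sub)groups then yields
$$\E_M[\Var_S(\hat f_{\it part}(\mathcal{P}'))] = \sum_{k=1}^K \left(\frac{|V_k|}{|V|}\right)^2 \E_M[\Var_S(\hat f_{\it part}(\mathcal{P}_k))].$$

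It then remains to compare the two sums term by term. Applying Lemma~\ref{lemma:greedy vs naive} with $V_k$ playing the role of the full vertex set and $r_k$ the sample budget --- legitimate because the lemma is stated for an arbitrary vertex set and sample size --- yields $\E_M[\Var_S(\hat f_{\it part}(\mathcal{P}_k))] \le \E_M[\Var_S(\hat f_{\it naive}(V_k,r_k))]$ for each $k$ with $r_k>1$, with equality for the untouched groups where $r_k=1$ (there $\mathcal{P}_k=\{(V_k,1)\}$ and $\hat f_{\it part}(\mathcal{P}_k)=\hat f_{\it naive}(V_k,1)$). Summing these inequalities against the nonnegative weights $(|V_k|/|V|)^2$ and invoking the two displayed decompositions gives exactly $\E_M[\Var_S(\hat f_{\it part}(\mathcal{P}'))] \le \E_M[\Var_S(\hat f_{\it part}(\mathcal{P}))]$, as desired. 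I expect the main obstacle to be justifying the variance decomposition cleanly: one must argue that under fixed $f$ the per-group naive estimates are genuinely independent, so that $\Var_S$ of the weighted sum splits additively with no cross terms, and that applying Lemma~\ref{lemma:greedy vs naive} to the restricted instance on $V_k$ is valid --- in particular that the model randomness $M$ restricted to $V_k$ together with the within-$V_k$ sampling matches the lemma's hypotheses. Once these independence and restriction points are settled, the remainder is routine bookkeeping.
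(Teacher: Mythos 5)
Your proposal is correct and follows essentially the same route as the paper: the per-group variance decomposition via independence of the within-group naive estimates, followed by applying Lemma~\ref{lemma:greedy vs naive} to each group with $r_k>1$ (with $V_k$ as the vertex set and $r_k$ as the budget) and summing against the weights $(|V_k|/|V|)^2$. The decomposition you flag as the "main obstacle" is exactly what the paper proves first (its Eq.~(\ref{eq: part variance})), by expanding the second moment of the weighted sum and cancelling the cross terms using independence across groups.
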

    
    Theorem \ref{theorem: simple is better} shows the superiority of simple partitions, and justifies that it is enough for us to only optimize for partitioned
    sampling with simple partitions.
    
    \subsection{Dealing with Inaccurate Similarities}\label{section: discussion of similarity}
    When accurate opinion similarities are not available, 
    one still can use a {\em balanced partition}
    (i.e., all groups have the exact same size) to achieve as least good sampling result as naive sampling, 
    summarized as below:
    
    \begin{restatable}{theorem}{balanced}\label{thm:balancedpartition}
        Given a vertex set $V$ with $n$ nodes and sample size $r$ where $n$ is a multiple of $r$, partitioned sampling using any balanced simple partition $\mathcal{P}$ 
        is at least as good as naive sampling. That is,
        $\Var_S(\hat f_{\it part}({\cal P})) \leq 
        \Var_S(\hat f_{\it naive}(V,r))$ holds for any fixed opinions $f(v_1)$, \dots, $f(v_n)$.
    \end{restatable}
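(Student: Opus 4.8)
The statement concerns \emph{fixed} opinions, so the expectation over the model $M$ drops out and only the sampling randomness $S$ matters. The plan is a direct variance computation that reduces to a single convexity inequality, using two facts repeatedly: that opinions are binary (so $f(v)^2 = f(v)$), and that within each group the single draw is uniform over that group (with subsample size one, sampling with and without replacement coincide).

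First I would record the naive baseline. For fixed opinions, naive sampling draws $r$ i.i.d.\ nodes uniformly from $V$, so each $f(x_i)$ is a Bernoulli variable with mean $\bar f$ and, using $f(v)^2=f(v)$, variance $\bar f(1-\bar f)$; independence of the draws then gives $\Var_S(\hat f_{\it naive}(V,r)) = \bar f(1-\bar f)/r$. Next, for a balanced simple partition $\mathcal{P}=\{(V_1,1),\dots,(V_r,1)\}$ with $|V_k|=n/r$, the estimator collapses to $\hat f_{\it part}(\mathcal{P}) = \sum_{k=1}^r \frac{|V_k|}{n} f(x_k) = \frac{1}{r}\sum_{k=1}^r f(x_k)$, where $x_k$ is uniform over $V_k$. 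Writing $\bar f_k = \frac{1}{|V_k|}\sum_{v_j\in V_k} f(v_j)$ for the within-group fraction of $1$-opinions, each $f(x_k)$ is Bernoulli with mean $\bar f_k$ and variance $\bar f_k(1-\bar f_k)$, and the $x_k$ are independent across groups, so $\Var_S(\hat f_{\it part}(\mathcal{P})) = \frac{1}{r^2}\sum_{k=1}^r \bar f_k(1-\bar f_k)$.

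Then I would invoke balance. Since every group has size $n/r$, the overall mean is the plain average of the group means: $\bar f = \frac{1}{n}\sum_k |V_k|\,\bar f_k = \frac{1}{r}\sum_k \bar f_k$. The target inequality $\Var_S(\hat f_{\it part}(\mathcal{P})) \leq \Var_S(\hat f_{\it naive}(V,r))$ thus becomes $\frac{1}{r}\sum_k \bar f_k(1-\bar f_k) \leq \bar f(1-\bar f)$, and after expanding both sides and cancelling the common linear term $\bar f = \frac{1}{r}\sum_k \bar f_k$ it reduces to
\[
\bar f^{\,2} \;\leq\; \frac{1}{r}\sum_{k=1}^r \bar f_k^{\,2}.
\]
This is exactly Jensen's inequality for the convex map $x\mapsto x^2$ (equivalently the power-mean or Cauchy--Schwarz inequality), valid for any reals, with equality iff all $\bar f_k$ coincide. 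This closes the proof.

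The only point that requires care is that \emph{balance} is what collapses the weights $|V_k|/n$ to the uniform $1/r$ and makes $\bar f$ the unweighted average of the $\bar f_k$; without this the comparison would carry group-size factors $|V_k|^2$ and would not reduce to a clean convexity statement. I therefore expect no substantial obstacle beyond this bookkeeping and the binary identity $f^2=f$.
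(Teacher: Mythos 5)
Your proof is correct and follows essentially the same route as the paper's: decompose the partitioned-sampling variance into the sum of per-group Bernoulli variances $\frac{1}{r^2}\sum_k \bar f_k(1-\bar f_k)$, use balance to write $\bar f$ as the unweighted average of the $\bar f_k$, and close with $\bar f^{\,2} \le \frac{1}{r}\sum_k \bar f_k^{\,2}$ (the paper states this last step as the Cauchy--Schwarz inequality $r\sum_k \bar f_k^2 \ge (\sum_k \bar f_k)^2$, which is the same convexity fact). No gaps.
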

    
    Theorem \ref{thm:balancedpartition} provides a safety net showing that partitioned sampling would not hurt sampling quality. 
    Thus, we can always use the greedy algorithm with a balance partition constraint to achieve better sampling result.
    The result will be further improved if opinion similarities get more accurate.
    
    Furthermore, in the experiment on the real-world dataset (Section \ref{section: experimental evaluation}), 
    we artificially remove all the opinion similarity information (set as $0.5$) between disconnected individuals, and perturb the rest opinion similarities more than $30\%$, to simulate the condition of missing and inaccurate similarities. 
    The experimental result shows that the performance of the greedy algorithm with perturbed inputs is quite close to the performance of the greedy algorithm with exact inputs.
    This demonstrates the robustness of our greedy algorithm in the face of missing and inaccurate opinion similarity data.
    
    Moreover, since real-world social interaction can be characterized well by opinion evolution models, 
    we adapt a well-known opinion evolution model and give an exact computation of opinion similarity based on the model in the next section.
    The model essentially provides a more compact representation than pairwise similarities.
    
    \section{Opinion Evolution Model}\label{section: opinion evolution model and opinion similarity}
    
    We adapt the well-known {\em voter model} to describe social dynamics \cite{VoterModel,DiscreteOpinion}.
    Consider a weighted directed social graph $G=(V, A)$ where $V=\{v_1, v_2, \dots, v_n\}$ is the vertex set and $A$ is the weighted adjacency matrix. 
    Each node is associated with both an {\em innate opinion} and an {\em expressed opinion}. 
    The innate opinion remains unchanged from external influences, while the expressed opinion could be shaped by the opinions of one's
    neighbors, and is the one observed by sampling.
    At initial time, 
    each node $v_i$ generates its innate opinion $f^{(0)}(v_i) \in \{0,1\}$ from an i.i.d.\ Bernoulli distribution with expected value $\mu^{(0)}$.
    The use of i.i.d.\ distribution for the innate opinion is due to the lack of prior knowledge on a brand-new 
    topic, and is also adopted in other models \cite{SocialSampling}.
    When $t>0$, each node $v_i$ updates its expressed opinion $f^{(t)}(v_i) \in \{0,1\}$ independently according to a Poisson process with updating rate $\lambda_i$: 
    at its Poisson arrival time $t$, node $v_i$ sets $f^{(t)}(v_i)$ to its 
    innate opinion with an {\em inward probability} $p_i>0$, or with probability $(1-p_i)A_{ij}/\sum_{k=1}^{n} A_{ik}$, adopts its out-neighbor $v_j$'s expressed opinion
    $f^{(t)}(v_j)$.
    We call the model  {\em Voter model with Innate Opinions (\model{})}.
    
    The \model{} model reaches a steady state if the joint distribution 
    of all node's expressed opinions no longer changes over time.\footnote{The \model{} model has a unique joint
        distribution for the final expressed 
        \ifdefined\full
        opinions. See Appendix \ref{section: steady state} for the proof.
        \else
        opinions \cite{HLC15}.
        \fi
    } 
    We use notation $f^{(\infty)}(v_i)$ to represent the steady-state expressed opinion of node $v_i$, which is a random variable.
    We assume that opinion sampling is done in the steady state, which means that people have sufficiently communicated within the social network.
    
    To facilitate analysis of the VIO model,
    we take an equivalent view of the \model{} model as {\em coalescing random walks} on an augmented graph $\overline G = (V\cup V', 
    E\cup \{e_1^\prime, e_2^\prime, \dots, e_n^\prime\})$, where $V^\prime=\{v_1', v_2', \dots, v_n'\}$ is a copy of $V$, $E$ is the edge set of $G$ and $e_i'=(v_i, v_i')$ for all $i$. In this viewpoint, we have $n$ walkers randomly wandering on $\overline G$ ``back in time'' as follows. At time $t$, all walkers are separately located at $v_1, v_2, \cdots, v_n$.
    Suppose before time $t$, $v_i$ is the last node who updated its expressed opinion at time $\tau<t$,
    then the $n$ walkers stay stationary on their nodes from time $t$ until time $\tau$ ``back in time''. At time $\tau$, the walker at node $v_i$ takes a walk step:
    she either walks to $v_i$'s out-neighbor $v_j\in V$ with probability $(1-p_i)A_{ij}/\sum_{k=1}^{n} A_{ik}$, 
    or walks to $v_i' \in V'$ with probability $p_i$.
    If any walker (e.g., the walker starting from node $v_i$) walks to a node (e.g., $v_k'$) in $V'$, 
    then she stops her walk.  In the \model{} model language, this is equivalent to saying that $v_i$'s opinion at time $t$ is determined
    by $v_k$'s innate opinion, namely $f^{(t)}(v_i)=f^{(0)}(v_k)$.
    If two random walkers meet at the same node in $V$ at any time, they 
    walk together from now on following the above rules (hence the name {\em coalescing}).
    Finally, at time $t=0$, if the walker is still at some node $v_i\in V$, she always walks to
    $v_i' \in V'$.
    
    We now define some key parameters based on the coalescing random walk model, which will be directly used for computing the opinion similarity later. 
    
    \begin{restatable}{definition}{walkparamdef}\label{def:walkparam}
        Let ${\cal I}_{ij}^\ell$ denote the event that two random walkers starting from $v_i$ and $v_j$ at time $t=\infty$ eventually meet and the first node they meet at is $v_\ell\in V$.
        Let $Q$ be the $n\times n$ matrix where $Q_{ij}$ denotes the probability
        that a random walker starting from $v_i$ at time $t=\infty$
        ends at $v_j^\prime \in V'$.
    \end{restatable}	
    
    \begin{restatable}{lemma}{walkparam}\label{lemma:walkparam}
        For $i,j,\ell\in [n]$, $\Pr\left[{\cal I}_{ij}^\ell\right]$ is the unique solution of the following
        linear equation system:
        \begin{equation*}
            \Pr\left[{\cal I}_{ij}^\ell\right]=
            \begin{cases}
                0, & i=j\neq \ell, \\
                1, & i=j=\ell,\\
                \sum_{a=1}^n \frac{\lambda_i (1-p_i) A_{ia} } {(\lambda_i+\lambda_j)d_i} \Pr[{\cal I}_{a j}^\ell] \\\quad
                +\sum_{b=1}^n \frac{\lambda_j (1-p_j) A_{jb} } {(\lambda_i+\lambda_j)d_j} \Pr[{\cal I}_{i b}^\ell]
                , & i\neq j,
            \end{cases}
        \end{equation*}
        where $d_i=\sum_{j=1}^n A_{ij}$ is $v_i$'s weighted out-degree.
        In addition, matrix $Q$ is computed by
        \begin{equation*}
            Q=\left(I-\left(I-P\right)D^{-1}A\right)^{-1}P,
        \end{equation*}
        where  $P=\diag(p_1, \dots, p_n)$ and $D = \diag(d_1, \dots, d_n)$
        are two diagonal matrices, and matrix $I-\left(I-P\right)D^{-1}A$ is invertible
        when $p_i>0$ for all $i\in [n]$.
    \end{restatable}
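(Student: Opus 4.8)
The plan is to establish the two assertions separately, deriving each from a one-step analysis of the coalescing walk ``back in time'' and then closing the argument with a substochasticity estimate. For the linear system governing $\Pr[{\cal I}_{ij}^\ell]$, I would first settle the boundary cases by inspection: when $i=j$ the two walkers already occupy the common node $v_i$, so the first node they meet at is $v_i$, giving $\Pr[{\cal I}_{ii}^\ell]=1$ when $i=\ell$ and $0$ otherwise. For $i\neq j$, the idea is to condition on which walker takes the next step going back in time. Since nodes $v_i$ and $v_j$ update along independent Poisson processes with rates $\lambda_i,\lambda_j$, memorylessness implies the more recent of the two previous updates belongs to $v_i$ with probability $\lambda_i/(\lambda_i+\lambda_j)$ and to $v_j$ with probability $\lambda_j/(\lambda_i+\lambda_j)$, and that after a step the surviving clock resets, so the walk is Markov on pairs of locations. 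Conditioned on walker $i$ moving, it steps to $v_a\in V$ with probability $(1-p_i)A_{ia}/d_i$ (landing in state $(a,j)$) or is absorbed at $v_i'\in V'$ with probability $p_i$; in the latter branch the walkers can never meet in $V$, so it contributes $0$ to ${\cal I}_{ij}^\ell$. Combining the two symmetric cases and discarding the absorbing branches gives exactly the stated recursion, which is moreover self-consistent with the boundary data (if $a=j$ the walkers have met at $v_j$, matching $\Pr[{\cal I}_{jj}^\ell]$).

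Uniqueness is where $p_i>0$ does the work. Writing the off-diagonal unknowns as a vector $x$ and the recursion as $x=Mx+c$, with $c$ collecting the boundary contributions, the matrix $M$ is non-negative and the row indexed by $(i,j)$ has total mass at most $\frac{\lambda_i(1-p_i)+\lambda_j(1-p_j)}{\lambda_i+\lambda_j}$, since $\sum_a A_{ia}=d_i$. As every $p_i>0$, this is strictly below $1$ uniformly over the finitely many pairs, so $\|M\|_\infty<1$, whence the spectral radius of $M$ is below $1$ and $I-M$ is invertible. Thus the probabilities $\Pr[{\cal I}_{ij}^\ell]$, which visibly satisfy the system, are its unique solution.

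For the matrix $Q$ I would apply the same one-step reasoning to a single walker: starting at $v_i$ it is absorbed directly into $v_i'$ with probability $p_i$ or steps to $v_a$ with probability $(1-p_i)A_{ia}/d_i$ before continuing, giving $Q_{ij}=p_i[i=j]+\sum_a \frac{(1-p_i)A_{ia}}{d_i}Q_{aj}$. In matrix form this reads $Q=P+(I-P)D^{-1}A\,Q$, whose unique solution is $Q=\left(I-(I-P)D^{-1}A\right)^{-1}P$. Invertibility follows from the identical estimate: the non-negative matrix $(I-P)D^{-1}A$ has $i$-th row sum $(1-p_i)\sum_a A_{ia}/d_i=1-p_i<1$, so its infinity-norm and hence spectral radius are below $1$; equivalently the inverse equals the convergent Neumann series $\sum_{k\ge 0}\left((I-P)D^{-1}A\right)^k$, which also exhibits $Q$ as a sum over walk lengths and confirms the probabilistic interpretation.

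I expect the main obstacle to be the careful probabilistic justification of the competing-clocks factor $\lambda_i/(\lambda_i+\lambda_j)$ together with the memorylessness claim that legitimizes conditioning on a single step and recursing on the reduced pair; once the recursion is correctly set up, both the uniqueness and the invertibility are routine consequences of the strict leakage into $V'$ guaranteed by $p_i>0$.
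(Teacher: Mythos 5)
Your proposal is correct and follows essentially the same route as the paper's proof: the same one-step conditioning on the competing Poisson clocks to derive the recursion and the boundary cases, and the same row-sum bound $\frac{\lambda_i(1-p_i)+\lambda_j(1-p_j)}{\lambda_i+\lambda_j}<1$ (resp.\ $1-p_i<1$ for $Q$) to get invertibility --- the paper phrases this as strict diagonal dominance via the Levy--Desplanques theorem and obtains $Q$ by summing $(P_{VV})^\ell P$ over walk lengths, which is equivalent to your $\|M\|_\infty<1$ / Neumann-series formulation.
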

    
    Our main analytical result concerning the \model{} model is the following exact computation of pairwise opinion correlation, which directly leads to opinion similarity:
    
    \begin{restatable}{lemma}{correlation}\label{lemma:correlation}
        For any $i,j\in [n]$, opinion correlation $\c_{ij}$ in the steady state is 
        equal to the probability that two coalescing random walks starting from $v_i$ and $v_j$ at time $t=\infty$ end at the same absorbing node in $V^\prime$.
        Moreover, opinion correlation $\c_{ij}$ can be computed by
        \begin{align*}
            \c_{ij}&=\Cor_M\left(f^{(\infty)}(v_i),f^{(\infty)}(v_j)\right) \\
            &= \sum_{k=1}^n Q_{ik}Q_{jk} + \sum_{\ell=1}^n\Pr\left[{\cal I}_{ij}^\ell\right]\left(1-\sum_{k=1}^n Q_{\ell k}^2\right)
        \end{align*}
        where ${\cal I}_{ij}^\ell$ and $Q$ are defined in Definition \ref{def:walkparam}, and
        $\Pr\left[{\cal I}_{ij}^\ell\right]$ and $Q$
        are computed by Lemma \ref{lemma:walkparam}.
    \end{restatable}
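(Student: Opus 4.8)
The plan is to prove the lemma in two stages. First I would identify the correlation $\c_{ij}$ with the purely combinatorial quantity $\Pr[E]$, where $E$ is the event that the two coalescing walkers launched from $v_i$ and $v_j$ (back in time from $t=\infty$) are absorbed at the \emph{same} copy-node in $V'$; this is the content of the first sentence of the lemma. Second, I would evaluate $\Pr[E]$ by coupling the coalescing pair of walks against a pair of fully independent walks, which produces the baseline term $\sum_k Q_{ik}Q_{jk}$ together with the coalescence correction.

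For the first stage I would work in the dual coalescing-walk representation. Writing $K_i$ for the (random) absorbing copy-node reached by the walk from $v_i$, the steady-state opinion is $f^{(\infty)}(v_i)=f^{(0)}(v_{K_i})$, and crucially the walk randomness is independent of the i.i.d.\ innate opinions $\{f^{(0)}(v_k)\}$. Conditioning on the walk outcomes and then averaging over the innate bits, each marginal $f^{(\infty)}(v_i)$ is Bernoulli with mean $\mu^{(0)}$ (using $\sum_k Q_{ik}=1$, which holds because $p_i>0$ forces almost-sure absorption), so $\Var_M(f^{(\infty)}(v_i))=\mu^{(0)}(1-\mu^{(0)})$. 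To compute $\E_M[f^{(\infty)}(v_i)f^{(\infty)}(v_j)]$ I would split on $E$ versus $E^c$: on $E$ the two opinions copy the same innate bit, so (as the bits are $\{0,1\}$-valued) the product has conditional mean $\mu^{(0)}$; on $E^c$ they copy distinct and hence independent bits, giving conditional mean $(\mu^{(0)})^2$. Substituting into the Pearson formula $\Cor_M=\frac{\E_M[XY]-\mu^{(0)}{}^2}{\mu^{(0)}(1-\mu^{(0)})}$, the factor $\mu^{(0)}(1-\mu^{(0)})$ cancels and leaves $\c_{ij}=\Pr[E]$.

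For the second stage I would couple the coalescing walk pair with two independent walks from $v_i$ and $v_j$ so that the two realizations agree up to the first instant the trajectories occupy a common node. The events $\{\mathcal{I}_{ij}^\ell\}_\ell$ (first meeting at $v_\ell$) together with the no-meeting event partition the sample space. On the no-meeting event the two models coincide and the walks are genuinely independent, contributing a shared absorbing node with probability $\sum_k Q_{ik}Q_{jk}$. On each $\mathcal{I}_{ij}^\ell$ the coalescing walks merge and are certain to share an absorbing node, contributing $1$; the independent coupling instead restarts two independent walks from $v_\ell$ and, by the strong Markov property, shares an absorbing node only with probability $\sum_k Q_{\ell k}^2$. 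Adding the independent baseline $\sum_k Q_{ik}Q_{jk}$ to the per-$\ell$ surplus $\Pr[\mathcal{I}_{ij}^\ell]\bigl(1-\sum_k Q_{\ell k}^2\bigr)$ over the coalescence events yields exactly the claimed formula.

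The main obstacle I anticipate is making the coupling and the Markov step fully rigorous in the continuous-time, back-in-time Poisson setting: I must justify that before their first meeting the two walkers may be realized as independent walks, that the first meeting node $v_\ell$ is well defined, and that the post-meeting absorption distribution depends only on $v_\ell$ and equals $Q_{\ell\cdot}$. The last point rests on time-homogeneity combined with the fact that $p_i>0$ guarantees absorption into $V'$ in finitely many steps almost surely, so that in the steady-state limit $t\to\infty$ the forced absorption at $t=0$ never governs the outcome and $Q$ is the genuine node-indexed absorption matrix computed in Lemma~\ref{lemma:walkparam}.
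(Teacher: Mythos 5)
Your proposal is correct and follows essentially the same route as the paper: first reduce $\c_{ij}$ to the same-absorbing-node probability by conditioning on the walk outcomes and exploiting independence of the walk randomness from the i.i.d.\ innate bits (same node gives conditional product mean $\mu^{(0)}$, distinct nodes give $(\mu^{(0)})^2$), then decompose both the coalescing pair and the independent pair on the first-meeting events $\mathcal{I}_{ij}^\ell$ versus no meeting, and subtract. Your coupling phrasing is just a repackaging of the paper's two identities $\Pr[\mathcal{M}_{i,j}^{k,k}]=\sum_\ell\Pr[\mathcal{I}_{ij}^\ell]Q_{\ell k}+\Pr[\mathcal{H}_{ij}^k]$ and $Q_{ik}Q_{jk}=\sum_\ell\Pr[\mathcal{I}_{ij}^\ell]Q_{\ell k}^2+\Pr[\mathcal{H}_{ij}^k]$, and your derivation of the marginal mean from $\sum_k Q_{ik}=1$ is an acceptable alternative to the paper's forward-in-time induction.
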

    
    \begin{restatable}{theorem}{similarity}\label{theorem: similarity}
        For any two nodes $v_i$ and $v_j$, their opinion similarity $\s_{ij}$ in the steady state of the \model{} model is equal to:
        $$\s_{ij}=1-2\mu^{(0)}(1-\mu^{(0)})(1-\c_{ij})$$
        where opinion correlation $\c_{ij}$ is computed by Lemma \ref{lemma:correlation}.
    \end{restatable}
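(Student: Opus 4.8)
The plan is to treat $X := f^{(\infty)}(v_i)$ and $Y := f^{(\infty)}(v_j)$ as two $\{0,1\}$-valued random variables and to translate the probability of agreement $\s_{ij} = \Pr_M[X = Y]$ into the correlation $\c_{ij}$ purely through their first two moments. The single algebraic bridge is the binary-variable identity already invoked in the proof of Theorem~\ref{theorem:objective function}: for binary $A,B$, $\E[AB] = \frac{1}{2}(\Pr[A=B] + \E[A] + \E[B] - 1)$. Applying it to $X$ and $Y$ expresses $\E_M[XY]$, hence $\Cov_M(X,Y)$, in terms of $\s_{ij}$ and the marginal means, after which the definition $\c_{ij} = \Cov_M(X,Y)/\sqrt{\Var_M(X)\Var_M(Y)}$ (supplied by Lemma~\ref{lemma:correlation}) becomes a linear equation in $\s_{ij}$ that I solve directly.

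First I would pin down the marginal distributions of $X$ and $Y$. Using the coalescing-random-walk view, each steady-state expressed opinion equals the innate opinion of the absorbing node its walk reaches: on the event (of probability $Q_{ik}$) that the walk from $v_i$ ends at $v_k' \in V'$, we have $f^{(\infty)}(v_i) = f^{(0)}(v_k)$. Since the innate opinions are i.i.d.\ Bernoulli with mean $\mu^{(0)}$ and the row sums of $Q$ equal one (every walk is eventually absorbed in $V'$ by Lemma~\ref{lemma:walkparam}, as $(I-P)D^{-1}A$ has row sums $1-p_i$), I get $\E_M[X] = \sum_k Q_{ik}\mu^{(0)} = \mu^{(0)}$, and likewise $\E_M[Y] = \mu^{(0)}$. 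Being $\{0,1\}$-valued, $X$ and $Y$ then each have variance $\mu^{(0)}(1-\mu^{(0)})$, so the denominator in the correlation is exactly $\mu^{(0)}(1-\mu^{(0)})$.

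With the moments in hand, the identity gives $\E_M[XY] = \frac{1}{2}(\s_{ij} + 2\mu^{(0)} - 1)$, whence $\Cov_M(X,Y) = \frac{1}{2}(\s_{ij} + 2\mu^{(0)} - 1) - (\mu^{(0)})^2$. Substituting into $\c_{ij} = \Cov_M(X,Y)/(\mu^{(0)}(1-\mu^{(0)}))$ and rearranging isolates $\s_{ij} = 1 - 2\mu^{(0)}(1-\mu^{(0)})(1-\c_{ij})$, as claimed. The genuinely model-dependent step is the marginal-mean computation $\E_M[X]=\mu^{(0)}$; everything else is the generic binary-variable algebra. I expect the main obstacle to be justifying cleanly that the walk is absorbed almost surely so that $Q$ has unit row sums — this is what lets the symmetry of the innate opinions survive into the expressed opinions — but that follows from Lemma~\ref{lemma:walkparam} together with the standing assumption $p_i > 0$, which both guarantees the required invertibility and forces absorption.
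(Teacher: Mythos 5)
Your proposal is correct and follows essentially the same route as the paper: combine the binary-variable identity $\E[AB]=\frac{1}{2}(\Pr[A=B]+\E[A]+\E[B]-1)$ with the facts that $\E_M[f^{(\infty)}(v_i)]=\mu^{(0)}$ and $\Var_M[f^{(\infty)}(v_i)]=\mu^{(0)}(1-\mu^{(0)})$, then equate the two resulting expressions for $\E_M[f^{(\infty)}(v_i)f^{(\infty)}(v_j)]$ and solve for $\s_{ij}$. The only (immaterial) divergence is in how the marginal mean is established: you derive it from the unit row sums of $Q$ in the absorbing-walk representation, whereas the paper proves $\E_M[f^{(t)}(v_i)]=\mu^{(0)}$ by induction over the Poisson update times; both are valid.
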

    
    Notice that for partitioning algorithms, we only need $1-\s_{ij}$ as the edge weight and by the above theorem this weight value is proportional to $1 - \rho_{ij}$, which means the exact value of $\mu^{(0)}$ is irrelevant for partitioning algorithms.
    \ifdefined\full
    In Appendix \ref{section: correlation computing proof}, 
    \else
    In the full report \cite{HLC15}, 
    \fi
    we will provide an efficient computation of all pairwise opinion correlations 
    with running time $O(nm R)$ by
    a carefully designed iterative algorithm,
    where $m$ is the number of edges of $G$ which is commonly sparse, and $R$ is the number of iterations. 
    We further remark that the correlations are calculated offline based on the existing network and historical data, 
    and thus the complexity compared to the sampling cost of telephone interview or network survey is relatively small. 
    
    \ifdefined\full
    In Appendix \ref{section: more on vio}, 
    \else
    In the full report \cite{HLC15}, 
    \fi
    we further extend the VIO model to include (a) non-i.i.d. distributions of the innate opinions, and (b) negative edges as in
    the signed voter model \cite{Li0WZ15}.
    
    \section{Experimental Evaluation}\label{section: experimental evaluation} 
    
    In this section, we compare the sampling quality of
    partitioned sampling using greedy partitioning (\textsf{Greedy}) 
    and partitioned sampling using SDP partitioning\footnote{We use CVX package \cite{cvx,gb08} to solve the SDP programming.} (\textsf{SDP}) against 
    naive sampling (\textsf{Naive}) based on the \model{} model, using both synthetic and real-world datasets.
    We describe major parameter settings for the experiments below, while leave the complete settings in
    \ifdefined\full
    Appendix \ref{section: experimental setting}.
    \else
    the full report \cite{HLC15} due to space constraints.
    \fi

    In our experiment, when the parameters of \model{} model are set, the simulation is done by (a) calculating the pairwise opinion similarities by Theorem \ref{theorem: similarity}, (b) running the partitioning algorithms to obtain the partition candidate, and (c) computing the expected variance $\E_M[\Var_S(\hat f)]$ by 
    Theorem~\ref{theorem:objective function}.
    
    \begin{figure}[t] 
        \centering
        \def \mag {0.224}
        \subfigure[Synthetic graph ($100$ nodes)] {      
            \includegraphics[width=\mag\textwidth]{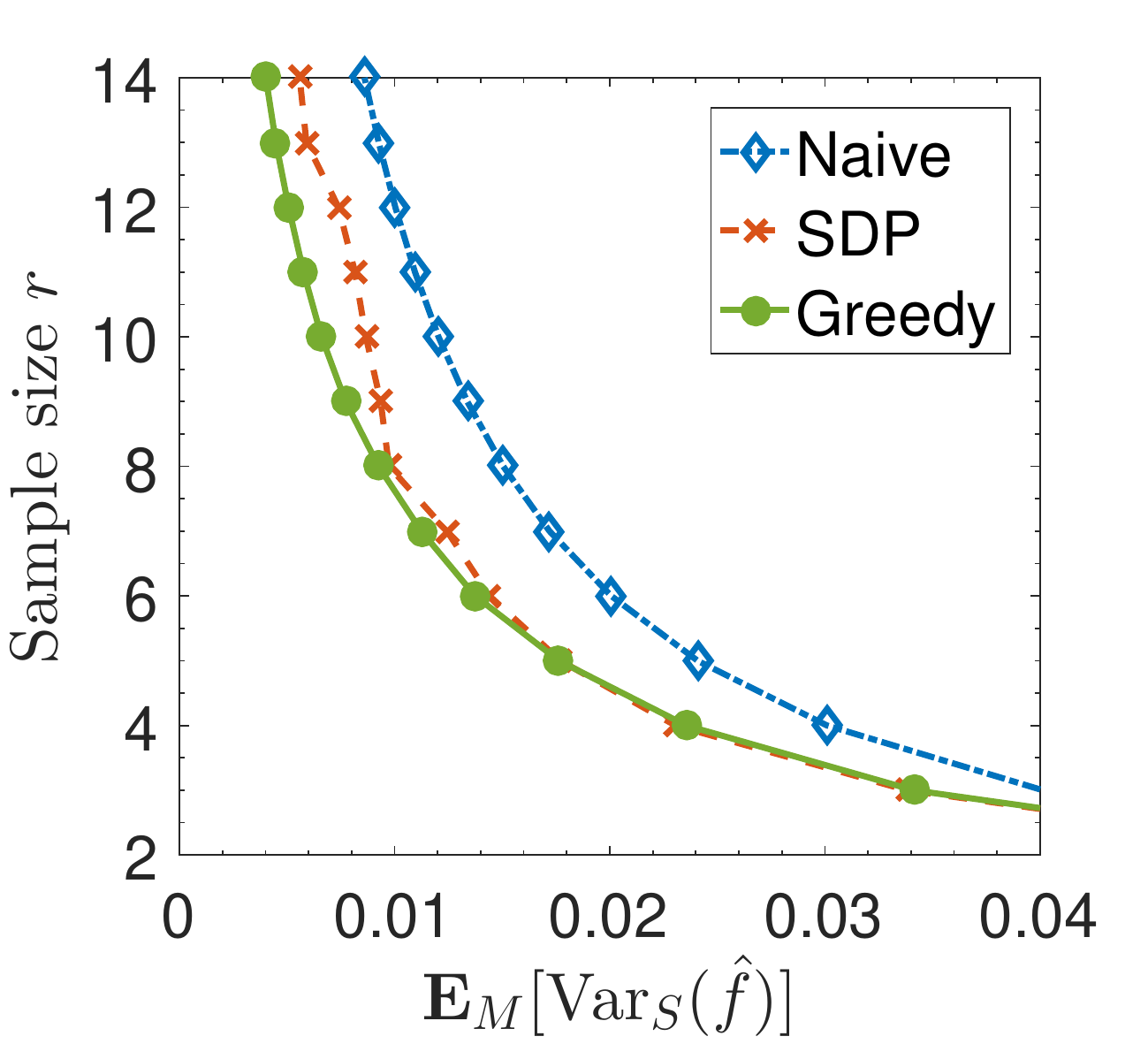}  \label{fig:small}
        }
        \subfigure[Synthetic graph ($\num{10000}$ nodes)] { \label{fig:cluster}     
            \includegraphics[width=\mag\textwidth]{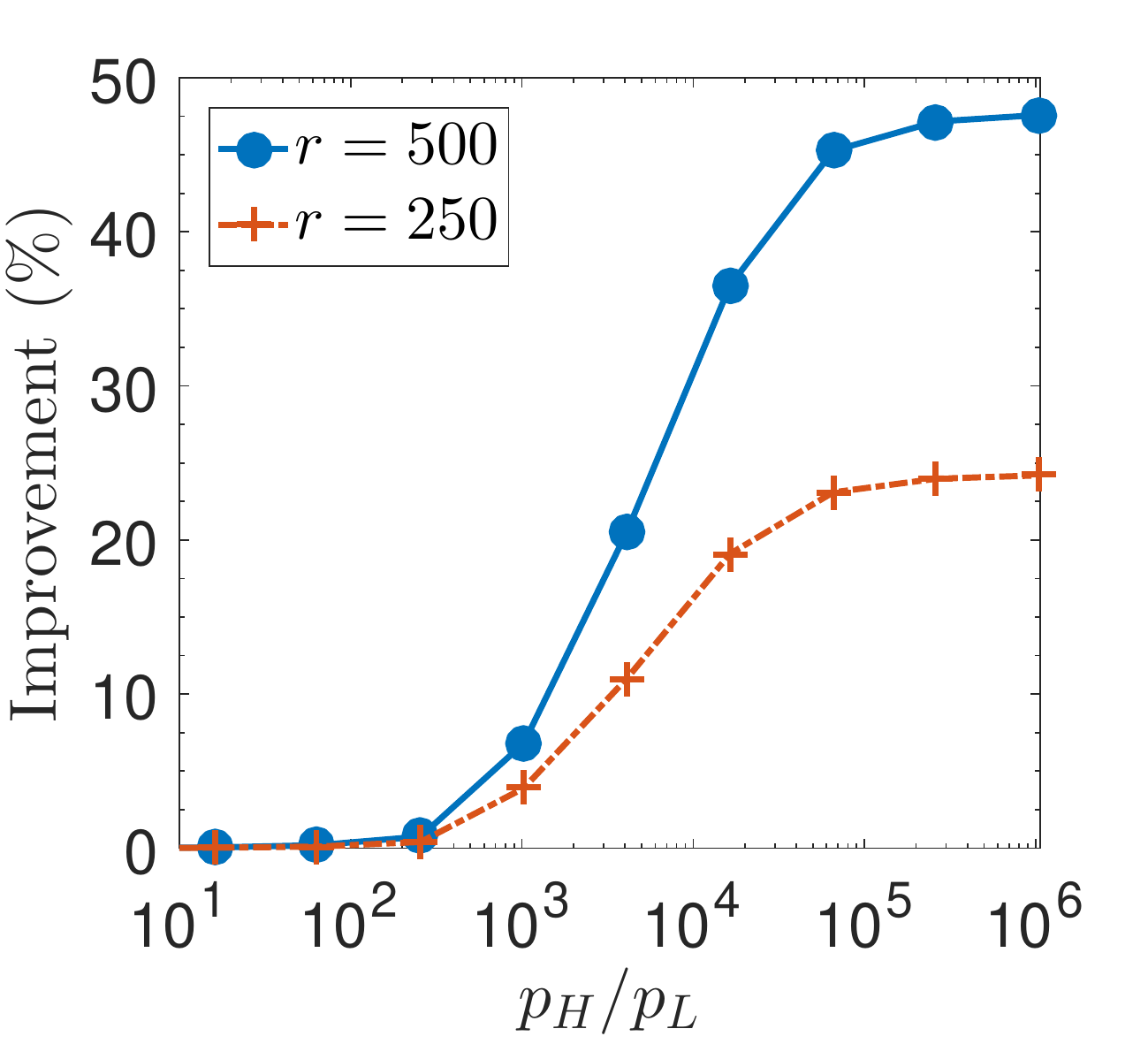} 
        }
        \subfigure[Synthetic graph ($\num{10000}$ nodes)] { \label{fig:inward}     
            \includegraphics[width=\mag\textwidth]{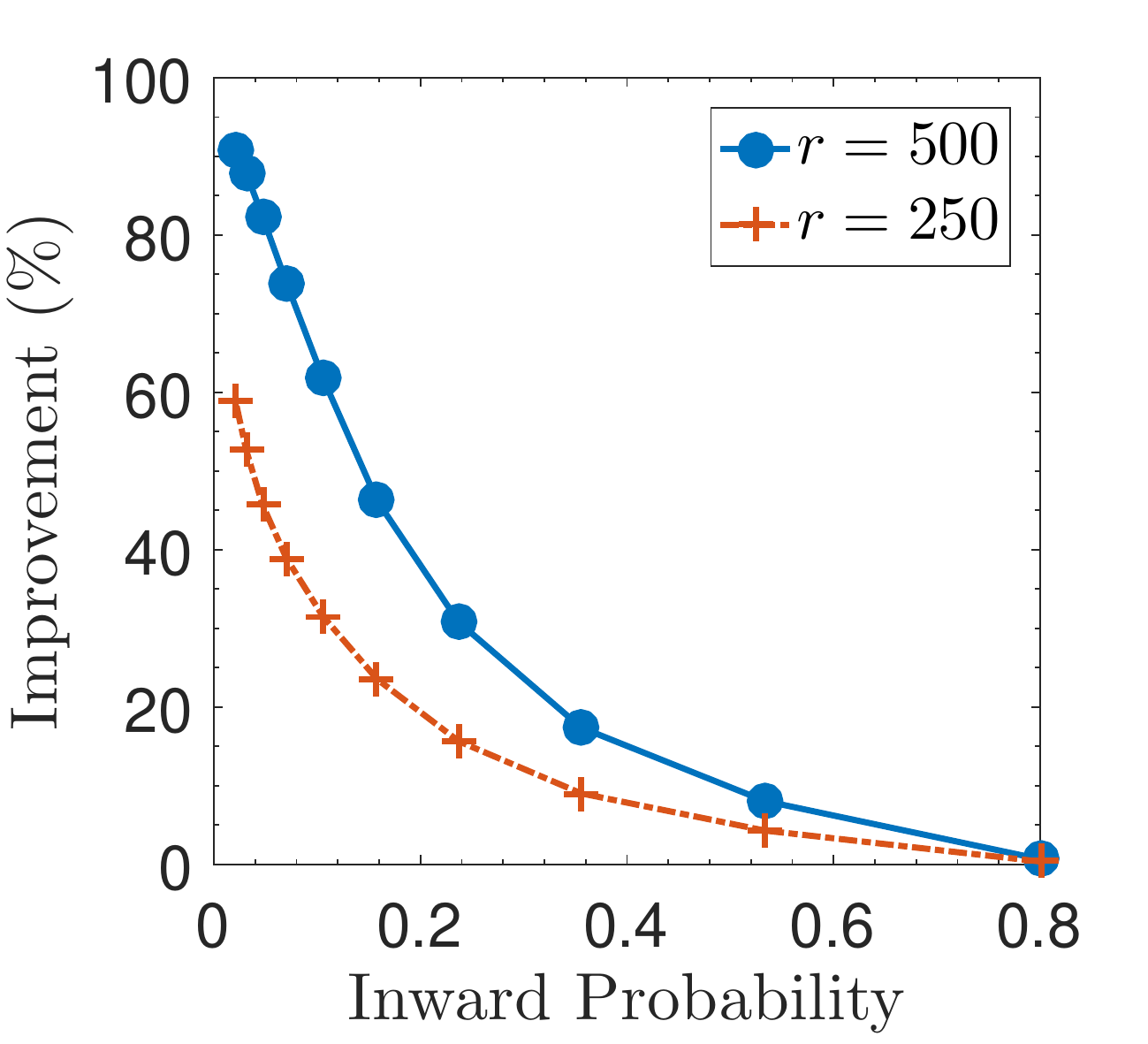}  
        }
        \subfigure[Inward probability distribution] { \label{fig:p distribution}     
            \includegraphics[width=\mag\textwidth]{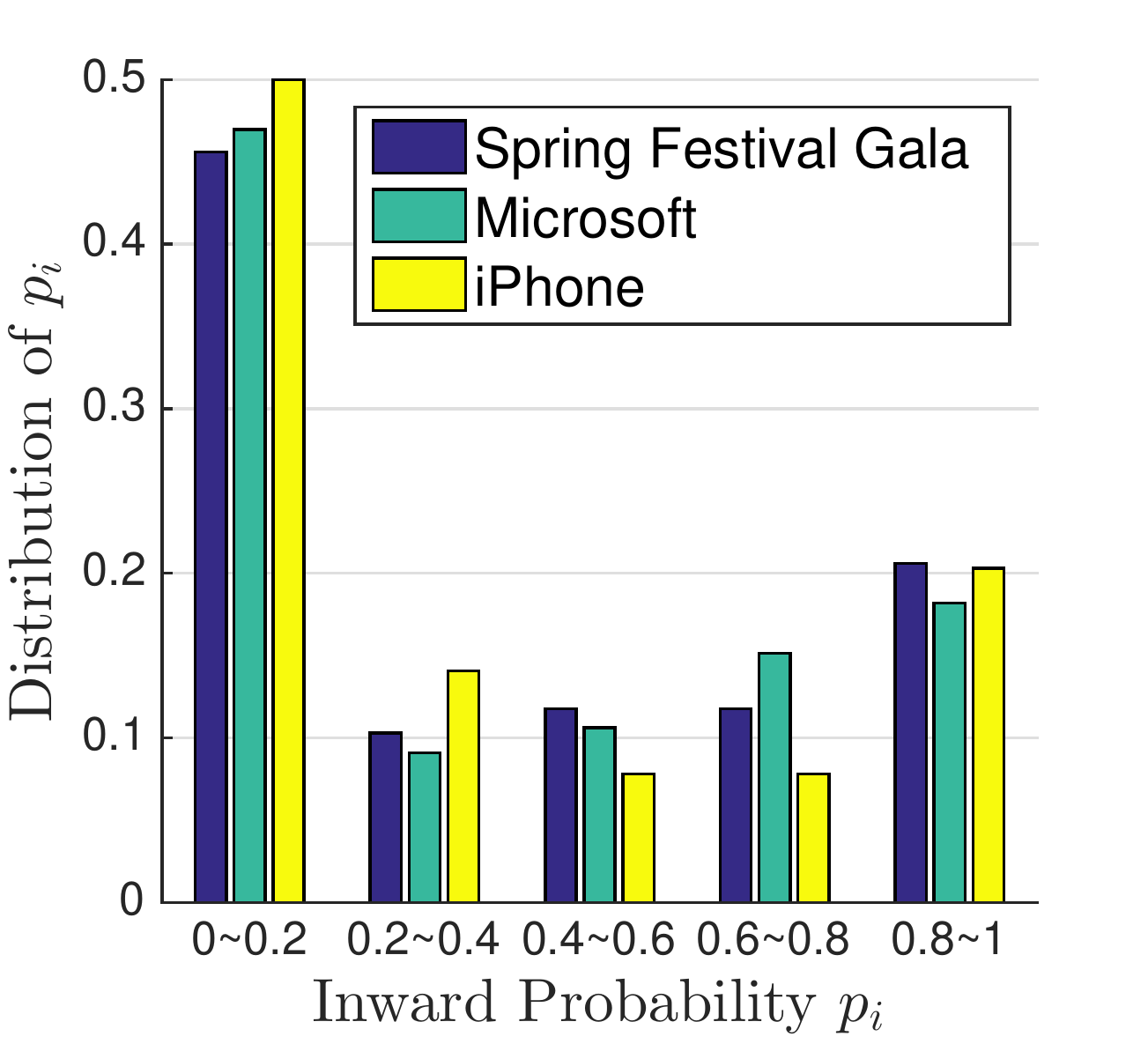}  
        }
        \subfigure[Weibo graph with $\lambda_i=1$] { \label{fig:lambda=1}     
            \includegraphics[width=\mag\textwidth]{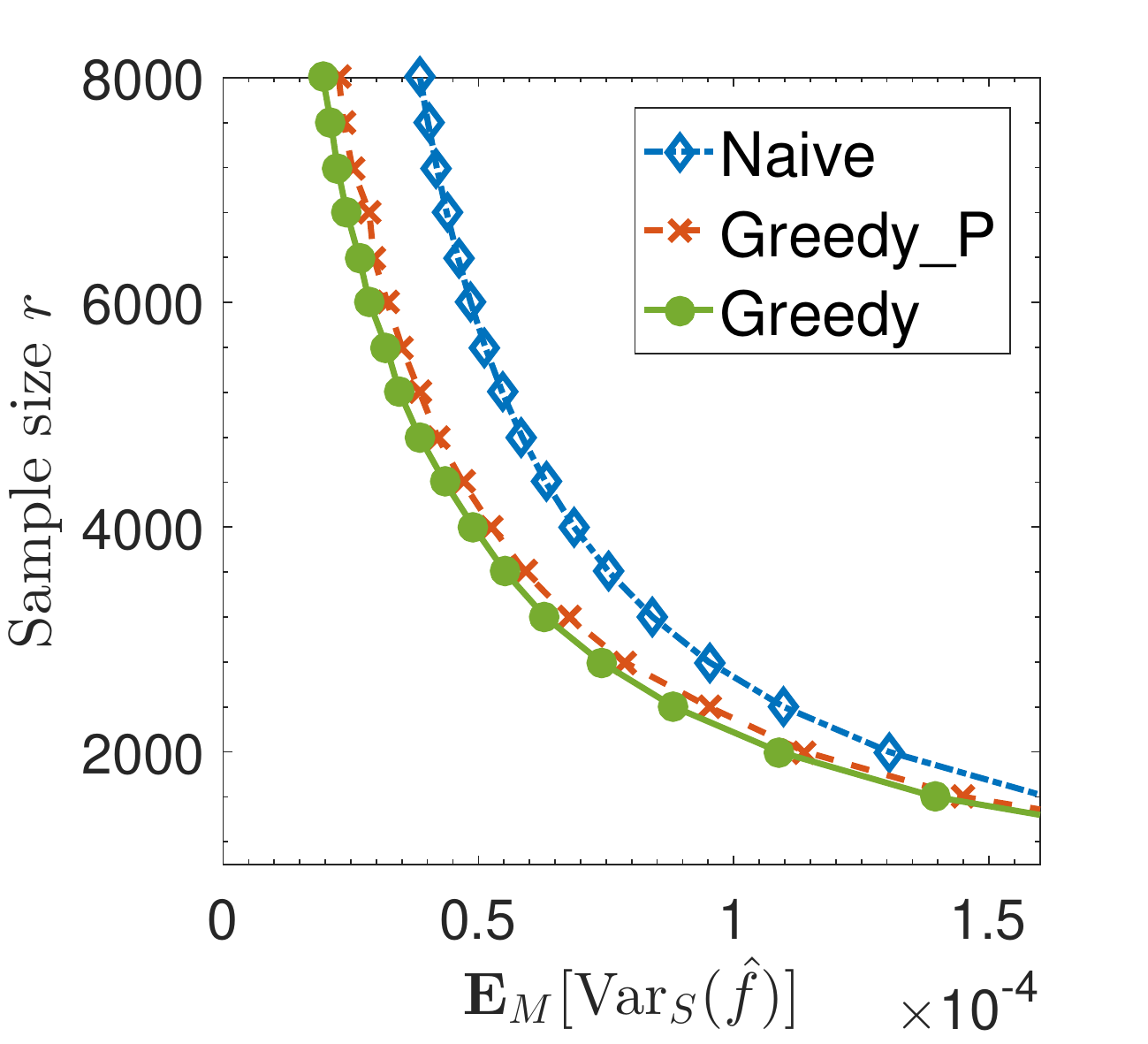} 
        }
        \subfigure[Weibo graph with $\lambda_i$ set to the number of $v_i$'s tweets in a year] { \label{fig:poisson}     
            \includegraphics[width=\mag\textwidth]{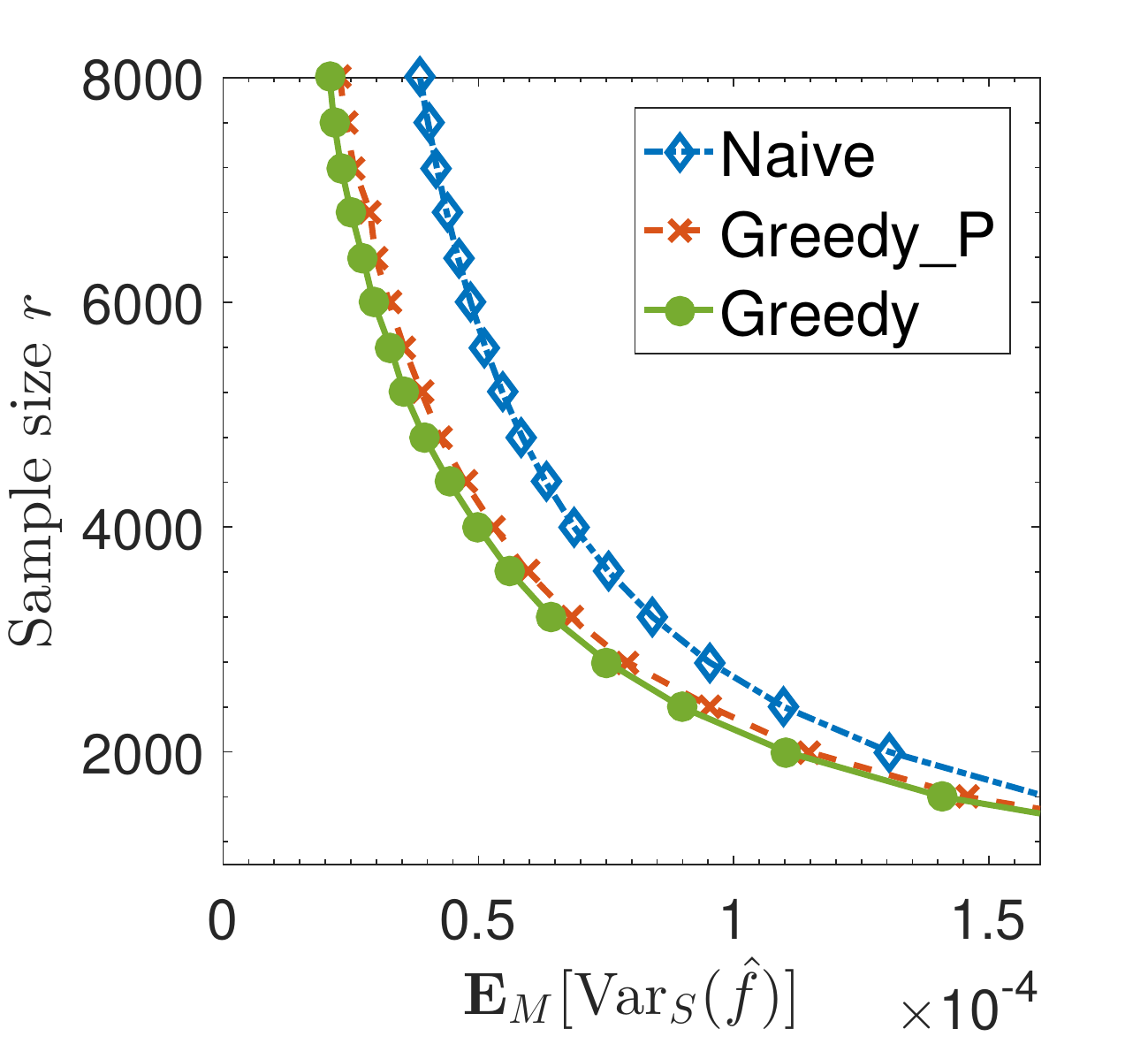}
        }
    \end{figure} 
    
    \textbf{Synthetic Dataset.} We use the planted partition model \cite{HiddenPartition} to generate undirected graphs, which aims at resembling the community structure in real-world social networks. 
    Given $n$ vertices and $k$ latent disjoint groups,
    every edge $(v_i,v_j)$ is generated with a high probability $p_H$ if $v_i$ and $v_j$ are in the same latent 
    group, otherwise with a low probability $p_L$.

    We generate two different sizes of synthetic graphs.
    The small one includes $100$ nodes and $20$ latent groups,
    and $p_H$, $p_L$ and $\lambda_i$ are set to $0.9$, $0.01$ and $1$, respectively.
    The inward probability of each node is randomly chosen from $[0,0.01]$.
    Fig \subref{fig:small} shows that,
    when the sample size $r$ is small, the performance of \textsf{SDP} and \textsf{Greedy} are similar to each other and both better than \textsf{Naive}.
    When the sample size $r$ increases,
    \textsf{Greedy} becomes much better than \textsf{Naive}, and \textsf{SDP} starts getting worse.
    For the large synthetic graph with $10$k nodes and $500$ latent groups, 
    \textsf{SDP} is no longer feasible, thus we compare the
    improvement of \textsf{Greedy} against \textsf{Naive}.
    In Fig \subref{fig:cluster},
    we range $p_H/p_L$
    and find that larger $p_H/p_L$ (more apparent clustering) indicates the better performance of the partitioned sampling method.
    When $p_H/p_L$ increases from $10^3$ to $10^5$, the improvement of expected sample variance increases rapidly.
    When $p_H/p_L>10^5$, 
    the improvement becomes saturated.
    This is because the number of edges which cross different latent groups are so few that it decreases rather slowly and the graph structure is almost unchanged when $p_H/p_L$ increases further.
    In Fig \subref{fig:inward},
    we set all nodes' inward probabilities to be equal and vary them from $0.02$ to $0.8$.
    The figure shows that the lower inward probability leads to the better performance of partitioned sampling.
    When the inward probability gets small, the improvement expected sample variance increases rapidly.
    This is because a lower inward probability means people interacting more with each other
    and thus their opinions are correlated more significantly.
    According to the above experiments, we conclude that
    the larger $p_H/p_L$
    and the lower inward probability 
    make people's opinions more clustered and more correlated inside the clusters, and our partitioned
    sampling method works better for these cases.

    \textbf{Real-World Dataset.}
    We use the micro-blog data from \url{weibo.com} \cite{weibodata},
    which contains $\num{100102}$ users and $\num{30518600}$ tweets within a one-year timeline from 1/1/2013 to 1/1/2014.
    We treat the user following relationship between two users as a directed edge (with weight $1$).
    
    We first learn the distribution of user's inward probabilities from the data.
    We extract a series of users' opinions on $12$ specific topics (e.g., Microsoft, iPhone, etc.) by applying a keyword classifier and a sentiment analyzer \cite{sentiment} to the tweets.
    We also collect their relationships and form a subgraph for each topic.
    Then we use VIO model to fit the data by solving a minimization problem w.r.t. inward probabilities using gradient descent.
    Fig \subref{fig:p distribution} shows the distribution of inward probabilities for 
    three of the topics, namely Spring Festival Gala ($68$ users), Microsoft ($66$ users) and iPhone ($59$ users), and the results for
    other topics are similar.
    From these distributions, we observe that
    (a) over $45\%$ inward probabilities locate in $[0, 0.2]$; 
    (b) the probability that $p_i$ locates in $[0.8, 1]$ is the second highest;
    (c) others almost uniformly locate in $[0.2, 0.8]$.
    This indicates that in the real world, most people tend to adopt others' opinions,
    which matches the intuition that people are often affected by others.
    We manually look up the users who locate in $[0.8, 1]$, 
    and find that most of them are media accounts and verified users.
    This matches our intuition that those users always take effort to spread their 
    own opinions on the web but rarely adopt others' opinions,
    hence they should have large inward probabilities.
    
    Now we simulate the sampling methods on the Weibo graph.
    We first remove the users who do not follow anyone iteratively,
    and get the graph including $\num{40787}$ nodes and $\num{165956}$ directed edges.
    We generate each user's inward probability following the distribution we learned.
    We use two different settings for opinion updating rates:
    one is to set $\lambda_i=1$ for all $i\in[n]$;
    the other is to set $\lambda_i$ to the number of $v_i$'s tweets in a year.
    The improvement of \textsf{Greedy} against \textsf{Naive}
    with two different updating rate settings are similar as shown in Fig \subref{fig:lambda=1} and \subref{fig:poisson}.
    In particular, if we fix $\E_M[\Var_S(\hat f)]$ to be $\num{3.86e-5}$, \textsf{Greedy} needs $4794$ samples while \textsf{Naive} needs $8000$ samples (saving $40.1\%$) in Fig~\subref{fig:lambda=1},
    and \textsf{Greedy} needs $4885$ samples while \textsf{Naive} needs $8000$ samples (saving $38.9\%$) in Fig~\subref{fig:poisson}.
    This indicates that partitioned sampling greatly improves the sampling quality,
    and the sample size saving is more apparent when
    the expected sample variance gets smaller (i.e., the requirement of sampling quality gets higher).
    Moreover, in order to test the performance of partitioned sampling with missing and inaccurate opinion similarities,
    we artificially remove all the opinion similarity information between disconnected nodes (set similarities as 0.5), and 
    perturb each rest similarity $\sigma_{ij}$ with a random noise $e_{ij}$ in the range $[-0.1-30\% \cdot \sigma_{ij}, 0.1+30\% \cdot \sigma_{ij}]$
    (set perturbed similarity of $\sigma_{ij}$ as the median of $\{0, \sigma_{ij}+e_{ij} , 1\}$).
    Fig \subref{fig:lambda=1} and \subref{fig:poisson} show that \textsf{Greedy} using the above perturbed similarities (denoted as \textsf{Greedy\_P}) is very close to \textsf{Greedy}, and still has a 
    significant improvement against naive sampling.
    
    In conclusion, the experimental results demonstrate the excellent performance of our partitioned sampling 
    method both on synthetic and real-world datasets, even when the opinion similarities are missing or inaccurate.
    \bibliography{ref}
    \bibliographystyle{aaai}
    
    \ifdefined\full\else
\end{document}
\fi

\clearpage
{\parindent0pt
    \onecolumn
    \appendix
    \section*{{\LARGE Appendix}}
    \mbox{}\\
    The appendix is organized as follows.
    In Section \ref{section:object discussion}, we show that using variance $\Var_{M,S}(\hat f)$ as objective is equivalent to using $\E_M[\Var_S(\hat{f})]$ for the OPS problem .
    In Section \ref{section:SDP algorithm}, we present the formulation of our SDP partitioning algorithm.
    In Section \ref{section: steady state}, we show that the \model{} model has a unique steady state.
    In Section \ref{section: correlation computing proof}, we provide an efficient computation of pairwise opinion correlations.
    In Section \ref{section: experimental setting}, we provide the implementation details of the experiment.
    In Section \ref{section: proof}, we provide all the mathematical proofs of lemmas and theorems stated in the main paper.
    In Section \ref{section: more on vio}, we provide more discussion about the VIO model, including two extensions of the VIO model.

    \section{Objective Function of the OPS Problem}\label{section:object discussion}
In the definition of the OPS problem (Definition~\ref{def:OPS}),
    we use the expected variance $\E_M[\Var_S(\hat f)]$ as the objective function.
Another intuitive setting of the objective function is $\Var_{M,S}(\hat f)$
    which combines all the randomness into the variance together.
We now show that these two objective functions are equivalent.
Actually,
\begin{align*}
    \Var_{M,S}(\hat f) &= \E_{M,S}[ \hat f^2] - \E_{M,S}[ \hat f ]^2 
    = \E_M\E_S[\hat f^2] - (\E_M\E_S[\hat f])^2.
\end{align*}
Due to the unbiasedness of $\hat f$, we have
$\E_S[\hat f]=\bar f$.
Thus $$\Var_S(\hat f)=\E_S[\hat f^2]-\E_S[\hat f]^2=\E_S[\hat f^2]-\bar f^2.$$
Therefore 
\begin{align*}
    \Var_{M,S}(\hat f) &= \E_M\E_S[\hat f^2] - (\E_M\E_S[\hat f])^2\\
    &= \E_M\left[ \Var_S(\hat f) +\bar f^2\right] - \left(\E_M[\bar f]\right)^2\\
    &= \E_M[\Var_S(\hat f)] + \Var_M(\bar f).
\end{align*}
Since $\Var_M(\bar f)$ stays constant with different partitions, for any partition, the different between $\Var_{M,S}(\hat f)$ and $\E_M[\Var_S(\hat{f})]$ is the same.
Thus using variance $\Var_{M,S}(\hat f)$ as objective is equivalent to using $\E_M[\Var_S(\hat{f})]$ for the OPS problem.

    \section{SDP Partitioning Algorithm}\label{section:SDP algorithm}
In this section, we present the formulation of our SDP partitioning algorithm. 
The idea is to partition the assistant graph $G_a$ into $r$ groups by solving the Max-$r$-Cut problem for $G_a$.
The task is to find $r$ groups $V_1, V_2, \dots, V_r$ in order to maximize the following function:
$$\sum_{k\neq l}\Cut_{G_a}(V_k, V_l)$$
where $\Cut_{G_a}(V_k, V_l)$ is defined by $\sum_{v_i\in V_k, v_j\in V_l} w_{ij}$ and $w_{ij}=1-\sigma_{ij}$ is the weight of edge $(v_i,v_j)$ of $G_a$. 


Take an equilateral simplex in $\mathbb{R}^{r-1}$ with vertices $\vec{b}_1$, $\vec{b}_2$, \dots, $\vec{b}_r$. Let $\vec{c}=(\vec b_1+\vec b_2+\dots+\vec b_r)/r$, and let $\vec a_k=\frac{\vec b_k-\vec c}{\left\lVert\vec b_k-\vec c\right\rVert}$ for $1\leq k \leq r$.
Thus $\vec{a}_1$, $\vec{a}_2$, \dots, $\vec{a}_r$ have the following property:
\begin{equation*}
	\vec a_k \cdot \vec a_l=
	\begin{cases}
		1, &\text{if } \vec a_k = \vec a_l; \\
		-\frac{1}{r-1}, & \text{if } \vec a_k \neq \vec a_l.
	\end{cases}
\end{equation*}
We use $\vec y_i\in \{\vec{a}_1$, $\vec{a}_2$, \dots, $\vec{a}_r\}$ to represent which group node $v_i$ is located in. 
If node $v_i$ is in $k$-th group, then $\vec y_i=\vec a_k$. 
In this way, the maximization problem can be written as
\begin{align*}\tag{IP}
	\text{Maximize \quad} &\frac{r-1}{r} \sum_{i\neq j}w_{ij}\left(1-\vec y_i\cdot \vec y_j\right) \\
	\text{Subject to \quad} &\vec y_i \in \{\vec{a}_1, \vec{a}_2, \dots, \vec{a}_r\}, i\in\{1, 2, \dots, n\}.
\end{align*}
We then relax the above optimization problem by replacing $\vec y_i\cdot \vec y_j$ with the $(i,j)$ entry of a positive semi-definite symmetric matrix $Y$ whose diagonal elements are equal to $1$, and relaxing $\vec y_i\cdot \vec y_j$ to be not less than $-\frac{1}{r-1}$.
\begin{align*}
	\text{Maximize \quad} &\frac{r-1}{r} \sum_{i\neq j}w_{ij}\left(1-Y_{ij}\right) \\
	\text{Subject to \quad} &Y_{ii}=1, \forall i, \\
	&Y\succeq 0, \\
	&Y_{ij}\geq -\frac{1}{r-1}, \forall i\neq j,\\
	& Y \text{ is symmetric}.
\end{align*}
Our SDP partitioning algorithm is performed by solving the above SDP problem and rounding the SDP-relaxed solution to IP-flexible solution, which is shown in Algorithm~\ref{al:SDP partition}.

\begin{algorithm}[t]
    \caption{SDP Partitioning Algorithm}\label{al:SDP partition}
    \begin{algorithmic}[1]
        \Require Graph $G_a$ with $n$ nodes, number of groups $r$. 
        \State Solve the following SDP problem and compute the Cholesky decomposition of $Y$. Let $y_1$, $y_2$, \dots, $y_n$ be the resulting vectors.
        \Statex \vspace{-3mm}
        {\small\begin{align*}
            \text{Maximize \quad} & \frac{r-1}{r}\sum_{i\neq j}w_{ij}\left(1-Y_{ij}\right) \\
            \text{Subject to \quad}  &\text{(a) } Y_{ii}=1, \forall i; \text{ (b) } Y_{ij}\geq -\frac{1}{r-1}, \forall i\neq j; 
            \text{ (c) } Y\succeq 0; \text{ (d) } Y \text{ is symmetric}.
            \end{align*}}
        \State Choose $r$ random vectors $z_1$, $z_2$, \dots, $z_r$ from $\mathbb{R}^n$.
        \State Partition $V$ into $r$ groups $V_1$, \dots, $V_r$ according to which of  $z_1$, $z_2$, \dots, $z_r$ is closest to each $y_k$.\footnotemark
        \State \textbf{Output:} Partition $\mathcal{P}= \{(V_1,1),\ldots, (V_r, 1)\}$.
    \end{algorithmic}
\end{algorithm}
\footnotetext{If the partitioning result is less than $r$ groups, we reselect $r$ new random vectors from $\mathbb{R}^n$ and repeat the step again.\nopagebreak}
    \section{Steady State of the VIO Model}\label{section: steady state}
In this section, we show that the \model{} model has a unique steady state, as summarized as follow.
\begin{proposition*}\label{lemma: steady state}
    When $p_i>0$ for all $i\in [n]$, the \model{} model has a unique joint
    distribution for the final expressed opinions in the steady state.
\end{proposition*}

\begin{proof}
    The opinion evolution can be viewed  as a Markov chain.
    Each possible assignment of $v_1, v_2, \ldots, v_n$'s expressed opinions forms one state and the initial state of the Markov chain is $(f^{(0)}(v_1), f^{(0)}(v_2), \ldots, f^{(0)}(v_n))$.
    At each Poisson arrival time, the transition from one state to another represents the change of the opinion assignment. 
    Thus the state space consists of all the states reachable from the initial state.
    The \model{} model has a unique steady state distribution for the final expressed opinions if and only if the Markov chain has a unique stationary distribution.
    In order to prove the existence of the unique stationary distribution of the Markov chain, we only need to prove that the Markov chain is irreducible and aperiodic \cite{MarkovChains}.
    Notice that each state in the state space can be reached from the initial state.
    Meanwhile, each state in the state space can return to the initial state by 
    all the nodes updating their expressed opinions to the innate opinion one by one, which happens
    with a positive probability.
    This means that any two states in the state space are connected, indicating the irreducibility of the Markov chain.
    In addition, the initial state is aperiodic since it has a self-loop in the state transition graph (with probability at least $\sum_{i=1}^n p_i/n>0$).
    An irreducible Markov chain is aperiodic if there exists one aperiodic state.
    Therefore, the Markov chain is irreducible and aperiodic, with the unique stationary distribution being reached after long enough time.
\end{proof}
    \section{Efficient Computation of Opinion Correlation}\label{section: correlation computing proof}

Naive correlation computation directly using Lemma \ref{lemma:walkparam} and \ref{lemma:correlation} 
by solving the linear equation system for $\{\Pr\left[{\cal I}_{ij}^\ell\right]\}$'s would 
have a running time of $O(n^7)$ (See in proof of Lemma~\ref{lemma:walkparam}).
We now improve the running time to $O(nm R)$ by
a carefully designed iterative computation method,
where $m$ is the number of edges of the social graph $G$ and $R$ is the number of iterations.

We first consider the event that a walker starting at $v_i$ is absorbed by $v_j'\in V'$ after $k$ random walk steps in the coalescing random walk model.
We use $q_{ij}^{(k)}$ to represent the probability of the above event happening.
Initially, the walker is located at $v_i$.
With probability $p_i$, she takes her first step to the sink node $v_i'$.
Then the event that she is absorbed by $v_j'\in V'$ after $k$ steps happens if and only if $i=j$.
If she does not walk to the sink node in her first step, but walk to one of her neighbors $v_a$ (happening with probability $(1-p_i)A_{ia}/d_i$),
then the event happens if and only if she walks from $v_a$ to $v_j'$ in her rest $k-1$ steps.
Thus $q_{ij}^{(k)}$ can be computed iteratively by
\begin{align*}
q^{(k)}_{ij}=p_i \cdot 1_{i=j} +\sum_{a=1}^n\frac{(1-p_i)A_{ia}}{d_i}q^{(k-1)}_{aj}.
\end{align*}

The running time of computing all $\{q^{(k)}_{ij}\}$'s from $\{q^{(k-1)}_{ij}\}$'s is
$$\sum_{i=1}^n \sum_{j=1}^n (1+d_i)=O(nm),$$
where $m$ is number of edges of the social graph $G$.
We remark that when $k\rightarrow \infty$, $q_{ij}^{(k)}$ approaches to the $(i,j)$ entry of matrix $Q$ defined in Definition \ref{def:walkparam}. 
Thus $Q$ can be computed in the above iterative way.
We further remark that $Q$ can be computed column by column (fixing subscript $j$) to save running space.

Now we consider two walkers take coalescing random walks on the graph $\overline{G}$ starting at $v_i$ and $v_j$ respectively.
We use notation $M^{(k)}_{ij}$ to denote the probability that their walks coalesce after they taking altogether $k$ random walk steps .
If $i=j$, two walkers have coalesced since the beginning, thus $M^{(k)}_{ij}$ is alway equal to one.
When $i\neq j$, with probability $\frac{\lambda_i}{\lambda_i+\lambda_j}$, the first walk step is taken by the walker starting at $v_i$. 
If she walks to her sink node $v_i'$ (happening with probability $p_i$), then the other walker who is at $v_j$ must walk to the same sink node $v_i'$ in $k-1$ steps (happening with probability $q_{ji}^{(k-1)}$).
If she does not walk to her sink node but one of her neighbors $v_a$ (happening with probability $(1-p_i)A_{ia}/d_i$),
then two walkers will coalesce in the rest $k-1$ steps with probability $M^{(k-1)}_{aj}$.
The case that the first step is taken by the walker starting at $v_j$ follows the similar analysis.
Thus $M^{(k)}_{ij}$ can be calculated by
\begin{align*}
M^{(k)}_{ij}=&\frac{\lambda_i}{\lambda_i+\lambda_j}\left[p_i q_{ji}^{(k-1)}+\sum_{a=1}^n\frac{(1-p_i)A_{ia}}{d_i}M^{(k-1)}_{aj}\right]
+ \frac{\lambda_j}{\lambda_i+\lambda_j}\left[p_j q_{ij}^{(k-1)}+\sum_{a=1}^n\frac{(1-p_j)A_{ja}}{d_j}M^{(k-1)}_{ai}\right].
\end{align*}

The running time of computing $\{M^{(k)}_{ij}\}$'s with one iteration is 
$$\sum_{i=1}^n \sum_{j=1}^n (1+d_i)+(1+d_j)=O(nm),$$
where $m$ is number of edges of the social graph $G$.

According to Lemma \ref{lemma:correlation}, opinion correlation $\c_{ij}$ in the steady state is 
equal to $\lim_{k\rightarrow \infty}M^{(k)}_{ij}$.
Thus we can obtain people's opinion correlations by computing $\{M^{(k)}_{ij}\}$'s and $\{q^{(k)}_{ij}\}$'s iteratively in time $O(nmR)$ where $R$ is the number of iterations.
We remark that for any $i,j\in[n]$, $M^{(k)}_{ij}$ and $q^{(k)}_{ij}$ monotonically increase with increasing $k$,
    and both have the upper bound $1$. 
Thus the above iterative procedure will converge.
    \section{Experimental Setting Details}\label{section: experimental setting} 

In our experiment, when the parameters of \model{} model (i.e., weighted adjacency matrix $A$, people's inward probabilities $p_1$, $p_2$, \dots, $p_n$, updating rates of people's opinions $\lambda_1$, $\lambda_2$, \dots, $\lambda_n$, and the expected value of innate opinion $\mu^{(0)}$) are set, the experiment is done by (a) calculating the pairwise opinion similarities by Theorem \ref{theorem: similarity} and the efficient opinion correlation computation method given in Section \ref{section: correlation computing proof}, (b) running the partitioning algorithms\footnote{The iteration of greedy partitioning algorithm stops when the decrease of the cost function in one iteration is less than 0.01\% of the cost function.} to obtain the partition candidate, and (c) computing the expected variance $\E_M[\Var_S(\hat f)]$\footnote{Each randomized partitioning algorithm was run $10$ times, and we took the average of the expected variance as the result.} by Theorem \ref{theorem:objective function}.
In both synthetic and real-world datasets, we set $\mu^{(0)}$ to be $0.5$.
Notice that the value of $\mu^{(0)}$ has no effect on the results (Theorem \ref{theorem: similarity}).

\subsection{Synthetic Dataset}\label{section:synthetic}

%

In our synthetic experiments, we use the planted partition model to generate undirected graphs.
It is specified by four parameters: the number of vertices $n$, the number of latent groups $k$, the intra-partition and inter-partition edge probabilities $p_H$ and $p_L$, respectively. 
First, we assign each node to one of the $k$ latent groups uniformly at random.
Next, we independently connect each pair of nodes in the same latent group with probability $p_H$, and two nodes in different latent groups with probability $p_L<p_H$.
For synthetic graphs, we set opinion updating rate $\lambda_i$ to $1$ for all $i\in[n]$, and set the weight of each edge to $1$.
We generate two different sizes of synthetic graphs.

\subsubsection{Small synthetic graph} 
The small synthetic graph we generate includes $100$ nodes and $20$ latent groups.
Probability $p_H$ and $p_L$ are set to $0.9$ and $0.01$, respectively.
The inward probability of each node is randomly chosen from $[0,0.01]$.
We put sample size on $y$-axis in Fig \subref{fig:small} to make it easier to see the savings on the sample size under the same expected sample variance.

\subsubsection{Large synthetic graphs}
The large synthetic graphs includes $10$k nodes and $500$ latent groups.
We run \textsf{Greedy} and \textsf{Naive} using different sample sizes ($r=250$ and $r=500$), varying the inward probabilities and $p_H/p_L$, to observe the improvement of expected sample variance under different graph clustering and inward tendency settings.
In Fig \subref{fig:cluster},
we set all nodes' inward probabilities to $0.05$ and $p_H$ to $1$, and range $p_L$ from $10^{-1}$ to $10^{-6}$.
The improvement on $y$-axis means the improvement of expected sample variance from {\sf Naive}
to {\sf Greedy}.
In Fig \subref{fig:inward},
we set $p_H$ to $1$ and $p_L$ to be $10^{-5}$ to generate the graph.
For this graph, we set all nodes' inward probabilities to be identical, varying from $0.02$ to $0.8$.

\subsection{Real-World Dataset}\label{section:realdata}


The real-network dataset we use is the micro-blog data from \url{weibo.com},
which contains $\num{100102}$ users and $\num{30518600}$ tweets within a one-year timeline from 1/1/2013 to 1/1/2014.
We treat the user following relationship between two users as a directed edge (with weight $1$).
For this dataset, we first need to learn
the distribution of people's inward probabilities.

\subsubsection{Distribution of inward probabilities}\label{section:inward}

In order to observe the evolution of opinions for a specific topic of interest,
We manually choose $12$ specific topics (e.g., Microsoft, iPhone, etc.),
and extract all tweets from the Weibo dataset related to these topics (simply using keyword based classifier).
We then run each tweet through a sentiment analyzer to obtain binary opinion values (positive/negative).
Thus we get a series of opinions for each user at discrete time corresponding to each topic.
For each topic, we select those users who published opinions at least $4$ times,
and regard their first opinions as their innate opinions $f^{(0)}(v_1)$, $f^{(0)}(v_2)$, \dots, $f^{(0)}(v_n)$ and treat the average of the rest opinions as their expected opinions in the steady state state, denoted as $\mu_1$, $\mu_2$, \dots, $\mu_n$.
We then collect their relationships and form a subgraph for the corresponding topic.

Recall the definition of matrix $Q$ (Definition \ref{def:walkparam}), 
and it is easy to see that $\mu_i=\E_M[f^{(\infty)}(v_i)]=\sum_{j=1}^{n}Q_{ij}f^{(0)}(v_j)$.
Then
\begin{equation*}
    \begin{pmatrix}
        \mu_1\\
        \mu_2\\
        \dots \\
        \mu_n
    \end{pmatrix}
    =Q
    \begin{pmatrix}
        f^{(0)}(v_1)\\
        f^{(0)}(v_2)\\
        \dots \\
        f^{(0)}(v_n)
    \end{pmatrix}
    \text{ (or }
    \vec \mu=Q\vec{f}^{(0)}
    \text{)}.
\end{equation*}
Thus we can estimate the inward probabilities by solving the following programming
\begin{align*}
    &\text{Minimize \ }  \left\|\vec{\mu}-Q\vec{f}^{(0)}\right\|,\\
    &\text{Subject to \ }  0\leq p_i \leq 1, \forall i \in [n],
\end{align*}
and we use gradient descent method to handle the above programming.
Fig \subref{fig:p distribution} shows the distribution of inward probabilities for three of the topics.

\subsubsection{Partitioned sampling on Weibo graph}

For the original Weibo dataset, we first remove the users who do not follow anyone, iteratively.
Then we get our Weibo graph including $\num{40787}$ nodes and $\num{165956}$ directed edges.
We use two different settings for opinion updating rates:
one is to set $\lambda_i=1$ for all $i\in[n]$;
the other is to set $\lambda_i$ to be the number of $v_i$'s tweets in a year.
The users' inward probabilities are set in the following way so that it follows the
distribution we learned:
we sort all the inward probabilities learned in the last section among $12$ topics,
denoted as $\hat p_1 \leq \hat p_2 \leq \dots \leq \hat p_k$.
For each user $v_i$ in the Weibo graph, we select an integer $j$ from $\{1, 2,\ldots, k+1\}$ uniformly at random, 
and set $v_i$'s inward probability to a random real number in the following interval
$$\begin{cases}
[0, \hat p_1], &\text{if }j=1,\\
[\hat p_k, 1], &\text{if }j=k+1, \\
[\hat p_{j-1}, \hat p_j], &\text{others.}
\end{cases}$$
Since there are some $\hat p_j$ values that are zeros, we will have users with zero
inward probability.
For these users, we use a very small value $10^{-10}$ in our simulation since our computation of the \model{} model
requires inward probability to be greater than zero.
Figure \subref{fig:lambda=1} and \subref{fig:poisson} show the experimental results on the Weibo graph with all $\lambda_i=1$ and $\lambda_i$ set to the number of $v_i$'s tweets, respectively.
    \begingroup
    \allowdisplaybreaks
    \section{Mathematical Proofs}\label{section: proof}

\subsection{Unbiasedness of Partitioned Sampling}\label{unbiased proof}
\begin{proposition*}
    (Unbiasedness) Partitioned sampling is unbiased.
    Specifically, for any partition $\mathcal{P}$, $\E \left[\hat f_{\it part}(\mathcal{P})\right]=\bar f$.
\end{proposition*}
\begin{proof}
    For any partition $\mathcal{P}=\{(V_1, r_1),\ldots,(V_K,r_K)\}$,
    \begin{align*}
    &\E \left[\hat f_{\it part}(\mathcal{P})\right]
    = \sum_{k=1}^K  \frac{|V_k|}{|V|} \E \left[\hat f_{\it naive}(V_k,r_k)\right]
    =\frac{1}{|V|} \sum_{k=1}^K  |V_k| \sum_{v_i\in V_k}\frac{f(v_i)}{|V_k|}
    =\frac{\sum_{i=1}^n f(v_i)}{|V|}
    =\bar f.
    \end{align*}
    Notice that naive sampling in any group $V_k$ is unbiased, thus $\E [\hat f_{\it naive}(V_k,r_k)]$ is equal to the average opinion of the people in $V_k$ (second equality above).
    Therefore partitioned sampling is unbiased.
\end{proof}

\subsection{Proof of Theorem \ref{theorem:objective function}}
\objective*
\begin{proof}
    We use $x_k$ to denote the sample node selected in the $k$-th group $V_k$ of the simple partition $\mathcal{P}$. 
    The estimate of partitioned sampling with $\mathcal{P}$ can be written as
    $$\hat f_{\it part}(\mathcal P) = \frac{\sum_{k=1}^r n_k f(x_k)}{n},$$
    where $n=|V|$ and $n_k=|V_k|$.
	When $f$ is fixed, since $f(x_k)$'s for all $k \in [r]$	are independent, we have
	\[
	\Var_S \left(\hat f_{\it part}(\mathcal{P})\right) = \frac{1}{n^2} \sum_{k=1}^r n_k^2 \cdot \Var_S\left[f(x_k)  \right ]
		= \frac{1}{n^2} \sum_{k=1}^r n_k^2 \cdot (\E_S[f(x_k)^2] - \E_S[f(x_k)]^2).
	\]
	We then use the fact that $f(x_k)^2 = f(x_k)$, and $\E_S[f(x_k)] = \sum_{v_j \in V_k} f(v_j) / n_k$, to obtain
	\begin{align*}
	\Var_S \left(\hat f_{\it part}(\mathcal{P})\right) & = \frac{1}{n^2} \sum_{k=1}^r \left( n_k \sum_{v_j \in V_k} f(v_j) 
		- \left( \sum_{v_j \in V_k} f(v_j) + \sum_{v_i,v_j \in V_k, v_i\ne v_j} f(v_i)f(v_j)  \right)   \right)  \\
		& = \frac{1}{n^2} \sum_{k=1}^r \left( (n_k-1) \sum_{v_j \in V_k} f(v_j) 
		- \sum_{v_i,v_j \in V_k, v_i\ne v_j} f(v_i)f(v_j)  \right). 
	\end{align*}
	When $f$ is drawn from a distribution, we have
	\begin{align}\label{eq:evar}
	\E_M\left[\Var_S \left(\hat f_{\it part}(\mathcal{P})\right) \right] &= \frac{1}{n^2} \sum_{k=1}^r \left( (n_k-1) \sum_{v_j \in V_k} \E_M[f(v_j)] 
	- \sum_{v_i,v_j \in V_k, v_i\ne v_j} \E_M[f(v_i)f(v_j)]  \right).
	\end{align}
	Notice that for any two binary ($0/1$) random variables $A$ and $B$, the following equation holds:
	    \begin{align*}
	    \E[AB]=\frac{1}{2}\left(\Pr[A=B]+\E[A]+\E[B]-1\right).
	    \end{align*}
        Thus
            \begin{align}\label{eq: E[fifj]}
            \E_M[f(v_i) f(v_j)]=\frac{1}{2}(\s_{ij}+\E_M[f(v_i)]+\E_M[f(v_j)]-1).
            \end{align}
	Applying the above to Eq.~\eqref{eq:evar}, we have
	\begin{align*}
	&\E_M\left[\Var_S \left(\hat f_{\it part}(\mathcal{P})\right) \right]  \\
	&  = \frac{1}{n^2} \sum_{k=1}^r \left( (n_k-1) \sum_{v_j \in V_k} \E_M[f(v_j)] 
	+ \frac{1}{2}\cdot \sum_{v_i,v_j \in V_k, v_i\ne v_j} (1-\sigma_{ij}) - \frac{1}{2}\cdot \sum_{v_i,v_j \in V_k, v_i\ne v_j} (\E_M[f(v_i)] + \E_M[f(v_j)] ) \right) \\
	& = \frac{1}{2n^2} \sum_{k=1}^r \sum_{v_i,v_j \in V_k, v_i\ne v_j} (1-\sigma_{ij}) = \frac{1}{2n^2} \sum_{k=1}^r \Vol_{G_a}(V_k) = \frac{1}{2|V|^2} \cdot g({\mathcal{P}}).
	\end{align*}
\end{proof}

\subsection{Proof of Theorem \ref{theorem: simple is better}}

\greedy*
\begin{proof}
It has been known that, the sample variance of naive sampling is
\begin{equation}\label{eq:naive variance}
\Var_S(\hat f_{\it naive}(V,r))=\frac{1}{r}(\bar f-\bar f^2)
\end{equation}
where $\bar f$ is the average opinion of the entire population.
Thus
$$\E_M[\Var_S(\hat f_{\it naive})]
= \frac{\E_M[\bar f]-\E_M[\bar f^2]}{r},$$
where $\E_M[\bar f]=\frac{1}{n}\sum_{i=1}^n \E_M[f(v_i)]$, and
\begin{align*}
\E_{M}\left[\bar f^2\right]
&= \E_{M}\left[\left(\frac{\sum_{i=1}^n f(v_i)}{n}\right)^2\right] \\
&=\frac{1}{n^2}\sum_{i=1}^n \E_{M} [f(v_i)] +\frac{2}{n^2}\sum_{i<j}\E_M[f(v_i) f(v_j)] \\
&=\frac{1}{n}\sum_{i=1}^n \E_{M} [f(v_i)]-\frac{1}{n^2}\sum_{i<j}(1-\s_{ij}). \text{  (using Eq. (\ref{eq: E[fifj]}))}
\end{align*}
Therefore 
\begin{align*}
\E_M[\Var_S(\hat f_{\it naive})]
= \frac{\E_M[\bar f]-\E_M[\bar f^2]}{r}
= \frac{\sum_{i \neq j}(1-\s_{ij})}{2n^2r}.
\end{align*}

For the greedy partitioning algorithm (Algorithm \ref{al:greedy}), 
    suppose the randomly generated node sequence is
    $v_{s_1}$, $v_{s_2}$, \dots, $v_{s_n}$ (which is denoted by $x_1,x_2,\dots,x_n$ in the algorithm).
For the greedy assignment of $k$-th node in the first iteration, the group that $v_{s_k}$ is assigned to should make the cost function increased the least, 
    thus the increase of cost function is no more than $2\sum_{l=1}^{k-1} w_{s_k s_l}/r$, 
    where $w_{s_k s_l}=1-\sigma_{s_k s_l}$ is the weight of edge $(v_{s_k},v_{s_l})$ in the graph $G_a$.
Thus at the end of the first iteration, the cost function
\begin{align*}
g(\mathcal{P})
\leq 2 \sum_{k=2}^r\sum_{l=1}^{k-1} w_{s_k s_l}/r
= \frac{1}{r}\sum_{i \neq j}(1-\s_{ij}).
\end{align*}
Therefore
\begin{align*}
\E_M[\Var_S(\hat f_{\it part}(\mathcal{P}))]-\E_M[\Var_S(\hat f_{\it naive}(V,r))] 
=\frac{1}{2n^2}g(\mathcal{P})-\frac{\sum_{i \neq j}(1-\s_{ij})}{2n^2r}
\leq 0.
\end{align*}
This finishes the proof.
\end{proof}

\simple*
\begin{proof}
    We first show that the sample variance of partitioned sampling can be written as a weighted summation of the sample variance of naive sampling in each group as below:
\begin{align}\label{eq: part variance}
\Var_S(\hat f_{\it part}(\mathcal{P}))
=\sum_{k=1}^K \frac{n_k^2}{n^2} \Var_S(\hat f_{\it naive}(V_k,r_k))
\end{align}
where $n=|V|$, $n_k=|V_k|$, and $K$ is the number of groups of $\mathcal{P}$.

According to the definition of partitioned sampling,
$$\hat f_{\it part}(\mathcal{P})=\sum_{k=1}^K \frac{n_k}{n} \hat f_{\it naive}(V_k,r_k).$$
The estimate of naive sampling in two different groups are independent,
thus for any $k \neq l$,
\begin{align*}
\E_S[\hat f_{\it naive}(V_k,r_k) \hat f_{\it naive}(V_l,r_l)] 
=\E_S[\hat f_{\it naive}(V_k,r_k)]\cdot \E_S[\hat f_{\it naive}(V_l,r_l)].
\end{align*}
Therefore
\begin{align*}
\Var_S(\hat f_{\it part}(\mathcal{P}))
&=\E_S[\hat f_{\it part}(\mathcal{P})^2]-\E_S[\hat f_{\it part}(\mathcal{P})]^2\\
&=\E_S\hspace{-1mm}\left[\hspace{-1mm}\left(\sum_{k=1}^K \frac{n_k}{n} \hat f_{\it naive}(V_k,r_k)\hspace{-1mm}\right)^2\right]
\hspace{-1mm}-\hspace{-0.5mm}\E_S\hspace{-1mm}\left[\sum_{k=1}^K \frac{n_k}{n} \hat f_{\it naive}(V_k,r_k)\right]^2\\
&=\sum_{k=1}^K \frac{n_k^2}{n^2}\E_S[\hat f_{\it naive}(V_k,r_k)^2]
+ \sum_{k\neq l}\frac{n_k n_l}{n^2}\E_S[\hat f_{\it naive}(V_k,r_k) \hat f_{\it naive}(V_l,r_l)]\\
&\quad -\sum_{k=1}^K \frac{n_k^2}{n^2}\E_S[\hat f_{\it naive}(V_k,r_k)]^2
- \sum_{k\neq l}\frac{n_k n_l}{n^2}\E_S[\hat f_{\it naive}(V_k,r_k)]\E_S[ \hat f_{\it naive}(V_l,r_l)]\\
&=\sum_{k=1}^K \frac{n_k^2}{n^2}\E_S[\hat f_{\it naive}(V_k,r_k)^2]
-\sum_{k=1}^K \frac{n_k^2}{n^2}\E_S[\hat f_{\it naive}(V_k,r_k)]^2\\
&=\sum_{k=1}^K \frac{n_k^2}{n^2} \Var_S(\hat f_{\it naive}(V_k,r_k)).
\end{align*}
   
    For any partition $\mathcal{P}$ with $K$ groups, if there exists some group $V_k$ containing more than one sample nodes, according to Lemma \ref{lemma:greedy vs naive}, we can efficiently find the simple partition $\mathcal{P}_k^*$ for that group $V_k$ by one-round greedy partitioning algorithm such that
    \[
    \E_M\left[\Var_S(\hat f_{\it part}({\cal P}_k^*))\right] \leq
    \E_M\left[\Var_S(\hat f_{\it naive}(V_k,r_k))\right].
    \]
    
    Thus we do the above refining procedure for all the groups containing more than one sample nodes iteratively,
    and combine all the final groups together to get 
    the refined simple partition $\mathcal{P}'$ of $\mathcal{P}$. 
    It satisfies that
    \begin{align*}
    \E_M\left[\Var_S(\hat f_{\it part}({\cal P}'))\right] 
    & =\sum_{k=1}^K \frac{n_k^2}{n^2}\E_M\left[\Var_S(\hat f_{\it part}({\cal P}_k^*))\right] \\
    &\leq \sum_{k=1}^K \frac{n_k^2}{n^2} \E_M\left[\Var_S(\hat f_{\it naive}(V_k,r_k))\right] \\
    &=\E_M\left[\Var_S(\hat f_{\it part}({\cal P}))\right].
    \end{align*}
    
    This means that partitioned sampling using the refined simple partition $\cal P'$ is 
    at least as good as partitioned sampling using the original partition $\cal P$.
\end{proof}

\subsection{Proof of Theorem \ref{thm:balancedpartition}}
\balanced*
\begin{proof}    
    According to Eq. (\ref{eq: part variance}), for any simple partition $\mathcal{P}$, we have
    \begin{align*}
    \Var_S(\hat f_{\it part}(\mathcal{P}))=\sum_{k=1}^r \frac{n_k^2}{n^2}\Var_S(\hat f_{\it naive}(V_k, 1)).
    \end{align*}
    When $\mathcal{P}$ is a balanced simple partition, we have $n_k=n/r$ and $r_k=1$, thus
    \begin{align*}
    \Var_S(\hat f_{\it part}(\mathcal{P}))=\frac{1}{r^2}\sum_{k=1}^r \Var_S(\hat f_{\it naive}(V_k, 1))=\frac{1}{r^2}\sum_{k=1}^r \bar f_k(1-\bar f_k) \text{ (using Eq. (\ref{eq:naive variance}))}
    \end{align*}
    where $\bar f_k$ is the average opinion of the $k$-th group.
    
    Notice that $\bar f=\sum_{k=1}^r \bar f_k/r$ holds for any balanced simple partition, thus we have
    \begin{align*}
    \Var_S\left(\hat f_{\it part}({\cal P})\right)-\Var_S\left(\hat f_{\it naive}(V,r)\right)  
    &=\frac{1}{r^2}\sum_{k=1}^r \bar f_k(1-\bar f_k)-\frac{1}{r}\bar f (1-\bar f) \\
    &=\frac{1}{r^2}\sum_{k=1}^r \bar f_k(1-\bar f_k)-\frac{1}{r^2}\left(\sum_{k=1}^r\bar f_k\right) \left(1-\frac{1}{r}\sum_{k=1}^r \bar f_k\right)\\
    &=-\frac{1}{r^3}\left[ r{\sum_{k=1}^r\bar f_k^2}-{\left(\sum_{k=1}^r \bar f_k\right)^2} \right].
    \end{align*}
    
    According to Cauchy-Schwartz inequality, 
    $$r{\sum_{k=1}^r\bar f_k^2}=\left(\sum_{k=1}^r 1^2\right)\cdot \left(\sum_{k=1}^r\bar f_k^2\right)\geq{\left(\sum_{k=1}^r 1\cdot \bar f_k\right)^2}=\left(\sum_{k=1}^r \bar f_k\right)^2.$$
    Thus we have
    $$\Var_S\left(\hat f_{\it part}({\cal P})\right)-\Var_S\left(\hat f_{\it naive}(V,r)\right) \leq 0.$$
    This finishes the proof.
\end{proof}

\subsection{Proof of Theorem \ref{theorem: similarity}}
\walkparam*
\begin{proof}
    (a) For parameter ${\cal I}_{ij}^\ell$.    
    
    Recall from Definition \ref{def:walkparam} that ${\cal I}_{ij}^\ell$ denotes the event that two random walks starting from $v_i$ and $v_j$ at time $t=\infty$ eventually meet and the first node they meet at is $v_\ell \in V$.
    This event consists of two steps:
    1) the walker at $v_i$ (or $v_j$) moves to one of its neighbor $v_a$ (or $v_b$) with probability $(1-p_i)A_{ia}/d_i$ (or $(1-p_j)A_{jb}/d_j$);
    2) two random walks starting from $v_a$ (or $v_b$) and $v_j$ (or $v_i$) eventually meet and the first node they meet at is $v_\ell$.
    The probability that the walker at $v_i$ (resp. $v_j$) make a movement first is proportional to $v_i$'s (resp. $v_j$'s) Poisson rate, that is $\lambda_i/(\lambda_i+\lambda_j)$ (resp. $\lambda_j/(\lambda_i+\lambda_j)$).
    Thus
    for any $i\neq j$, $\Pr\left[{\cal I}_{ij}^\ell\right]$ can be calculated by the following recursion:
    \begin{align*}
        \Pr\left[{\cal I}_{ij}^\ell\right]
        &= \sum_{a=1}^n \frac{\lambda_i}{\lambda_i+\lambda_j}\frac{(1-p_i)A_{ia}}{d_i}\Pr\left[{\cal I}_{aj}^\ell\right] 
         + \sum_{b=1}^n \frac{\lambda_j}{\lambda_i+\lambda_j}\frac{(1-p_j)A_{jb}}{d_j}\Pr\left[{\cal I}_{ib}^\ell\right].
    \end{align*}
    When $i=j$, $\Pr\left[{\cal I}_{ij}^\ell\right]$ is determined by the following boundary conditions:
    $$\Pr\left[{\cal I}_{ij}^\ell\right]=
    \begin{cases}
    0, & i=j\neq \ell,\\
    1, & i=j=\ell.
    \end{cases}$$
    By combining the recursive equations and the boundary conditions, we have the following linear equations:
    
    \begin{equation}\label{eq:I_ij^l}
        \Pr\left[{\cal I}_{ij}^\ell\right]=
        \begin{cases}
            0, & i=j\neq \ell, \\
            1, & i=j=\ell,\\
            \sum_{a=1}^n \frac{\lambda_i (1-p_i) A_{ia} } {(\lambda_i+\lambda_j)d_i} \Pr[{\cal I}_{a j}^\ell]
            +\sum_{b=1}^n \frac{\lambda_j (1-p_j) A_{jb} } {(\lambda_i+\lambda_j)d_j} \Pr[{\cal I}_{i b}^\ell]
            , & i\neq j.
        \end{cases}
    \end{equation}
    The above proof follows the idea in \cite{DiscreteOpinion}.
       
    Next we show that the linear system (\ref{eq:I_ij^l}) has a unique solution.  
    For a fixed $\ell$, the equations for all terms $\Pr\left[{\cal I}_{ij}^\ell\right]$ such that $i\neq j$ form a linear sub-system with $\binom{n}{2}$ variables and $\binom{n}{2}$ equations, since $\Pr\left[{\cal I}_{ij}^\ell\right]=\Pr\left[{\cal I}_{ji}^\ell\right]$.
    Therefore, we can solve the whole linear system (\ref{eq:I_ij^l}) by solving $n$ separated linear sub-systems.
    Each linear sub-system corresponds to a value of $\ell$, and it can be solved\footnote{$n$-variable linear system can be solved in time $O\left(n^{3}\right)$.} in $O\left(\binom{n}{2}^{3}\right)=O\left(n^{6}\right)$, thus the original linear system (\ref{eq:I_ij^l}) can be solved in time $n\cdot O\left(n^6\right)=O\left(n^7\right)$. Thus we further develop an efficient computation method as shown in Section \ref{section: correlation computing proof}. 
    
    Now we show that there exists the unique solution for each linear sub-system.    
    Each equation in the linear sub-system can be written as    
    \begin{align*}
        \Pr\left[{\cal I}_{ij}^\ell\right]=\sum_{a\neq j} &\frac{\lambda_i(1-p_i)A_{ia}}{(\lambda_i+\lambda_j)d_i}\Pr\left[{\cal I}_{aj}^\ell\right]
        +\sum_{b\neq i} \frac{\lambda_j(1-p_j)A_{jb}}{(\lambda_i+\lambda_j)d_j}\Pr\left[{\cal I}_{ib}^\ell\right] 
        + \frac{\lambda_i(1-p_i)A_{ij}}{(\lambda_i+\lambda_j)d_i} \cdot 1_{j=\ell}
        +\frac{\lambda_j(1-p_j)A_{ji}}{(\lambda_i+\lambda_j)d_j} \cdot 1_{i=\ell}.
    \end{align*}
    
    Let $k=h(i,j)=(i-1)n+j$, then we have a bijection $h$ of subscript between integer $k$ and ordered pair $(i,j)$ where $i<j$. Then we can write these $\binom{n}{2}$ equations in the matrix form:
    $$I\vec x = M_1\vec x + M_2 \vec x+\vec b$$
    where $\vec x$ is a $\binom{n}{2}\times 1$ vector whose $k$-th element is $\Pr\left[{\cal I}_{ij}^\ell\right]$; 
    $M_1$ is a $\binom{n}{2}\times \binom{n}{2}$ matrix whose $(k,h(a,j))$ entry is $\frac{\lambda_i(1-p_i)A_{ia}}{(\lambda_i+\lambda_j)d_i}$; 
    $M_2$ is a $\binom{n}{2}\times \binom{n}{2}$ matrix whose $(k,h(i,b))$ entry is $\frac{\lambda_j(1-p_j)A_{jb}}{(\lambda_i+\lambda_j)d_j}$; 
    $\vec b$ is a $\binom{n}{2}\times 1$ vector whose $k$-th element is $\frac{\lambda_i(1-p_i)A_{ij}}{(\lambda_i+\lambda_j)d_i} \cdot 1_{j=\ell}
    +\frac{\lambda_j(1-p_j)A_{ji}}{(\lambda_i+\lambda_j)d_j} \cdot 1_{i=\ell}$.
    
    If $I-M_1-M_2$ is non-singular, then each linear sub-system has a unique solution $(I-M_1-M_2)^{-1}\vec b$. 
    In fact, for any row $k$ of $I-M_1-M_2$, let $(i,j)=h^{-1}(k)$, and the sum of the absolute value of the $k$-th row except the diagonal entry
    \begin{align*}
        \sum_{t:t\neq k} |I-M_1-M_2|_{kt}
        &= \sum_{a:a\neq j} \frac{\lambda_i(1-p_i)A_{ia}}{(\lambda_i+\lambda_j)d_i}
        +\sum_{b:b\neq i} \frac{\lambda_j(1-p_j)A_{jb}}{(\lambda_i+\lambda_j)d_j}\\
        &= \frac{\lambda_i(1-p_i)}{(\lambda_i+\lambda_j)}\sum_{a:a\neq j}\frac{A_{ia}}{d_i}
        + \frac{\lambda_j(1-p_j)}{(\lambda_i+\lambda_j)}\sum_{b:b\neq i}\frac{A_{jb}}{d_j}\\
        &\leq \frac{\lambda_i(1-p_i)}{(\lambda_i+\lambda_j)}
        + \frac{\lambda_j(1-p_j)}{(\lambda_i+\lambda_j)}\\
        & < \frac{\lambda_i}{(\lambda_i+\lambda_j)}
        + \frac{\lambda_j}{(\lambda_i+\lambda_j)}\\
        &
        =|I-M_1-M_2|_{kk}.
    \end{align*}
    Thus, $I-M_1-M_2$ is strictly diagonally dominant. According to the Levy-Desplanques theorem, it is non-singular.
    Since each linear sub-system has one unique solution, the whole linear system~(\ref{eq:I_ij^l}) also does.

    (b) For parameter $Q$.
    
    The probability of a walker from $v_i$ walking to $v_j$ in one walk step is $p_{ij}^{(1)}=(1-p_i)A_{ij}/d_i$.
    So we have a matrix form $P_{VV}=(I-P)D^{-1}A$ whose $(i,j)$ entry is $p_{ij}^{(1)}$. Therefore, the probability of walking from $v_i$ to $v_j$ in exactly $\ell$ steps is the $(i,j)$ entry of $(P_{VV})^\ell$. It is easy to verify that $\lim_{\ell\rightarrow+\infty}(P_{VV})^\ell \rightarrow 0$.
    
    By definition of our model, the probability of $v_j$ walking to $v_j'$ (being absorbed) is $p_j$.
    Thus the matrix $Q$ whose $(i,j)$ entry is the probability of transition from $v_i$ to $v_j'$ can be calculated by
    \begin{align*}
        Q&=\sum_{\ell=0}^{\infty} (P_{VV})^\ell P=(I-P_{VV})^{-1}P
        =\left(I-\left(I-P\right)D^{-1}A\right)^{-1}P.
    \end{align*}
    Now we show that $I-P_{VV}$ is invertible. 
    The $(i,j)$ entry of $I-P_{VV}$ is
    $$\begin{cases}1, & \text{if }i=j, \\ -\frac{1-p_i}{d_i}A_{ij}, & \text{if } i\neq j. \end{cases}$$
    For any row $i$ of $I-P_{VV}$, the sum of absolute values of its non-diagonal elements can be written as $\sum_{j:j\neq i} \frac{1-p_i}{d_i}A_{ij}=(1-p_i)(1-\frac{A_{ii}}{d_i})$, and it is strictly less than the absolute value of $i$-th diagonal elements $\left|I-P_{VV}\right|_{ii}=1$. 
    Thus $I-P_{VV}$ is strictly diagonally dominant, and it is non-singular (Levy-Desplanques theorem).
\end{proof}

\correlation*
\begin{proof}
    Before proving the lemma, we first introduce the following proposition:
    \begin{proposition*}
        The expected expressed opinion of each node in the steady state
        is equal to the expected value of innate opinion, namely, for all
        $i\in [n]$,
        \begin{equation}\label{eq:expectation}
        \mu_i=\E_M[f^{(\infty)}(v_i)]=\mu^{(0)}.
        \end{equation}
    \end{proposition*}

        We prove this by proving a stronger statement: given any $t\ge 0$, $\E_M [f^{(t)}(v_i)]=\mu^{(0)}$ for all $i\in[n]$. Namely, we want to prove that at any time $t$, each node's expected expressed opinion is equal to the expected innate opinion.  
        The proof is by induction on time $t$.
        In the initial state, each node's expressed opinion (also innate opinion) is generated from an i.i.d.\ distribution with expected value $\mu^{(0)}$, and thus the above statement holds.
        Now suppose the statement holds before time $t$. It still holds before the next Poisson arrival among all nodes.
        Suppose the next Poisson arrival comes at time $t_1$ and its corresponding updating node is $v_i$.
        At this Poisson arrival time $t_1>t$, $v_i$ updates its expressed opinion based on its innate opinion and one of its neighbors' expressed opinions. 
        Notice that the expectations of both its innate opinion $f^{(0)}(v_i)$ and all its neighbors' expressed opinions $f^{(t_1)}(v_j)$ are equal to $\mu^{(0)}$ by the inductive assumption, namely,
        $\E_M[f^{(0)}(v_i)]=\mu$ and $\E_M[f^{(t_1)}(v_j)]=\mu$ for all $v_j\in N_i$, where $N_i$ is the set of $v_i$'s neighbors.
        Thus the expectation of $v_i$'s updated expressed opinion $\E_M[f^{(t_1)}(v_i)]$ is still equal to $\mu$.
        Moreover, other nodes' expected expressed opinions remain  equal to $\mu^{(0)}$ upon time $t_1$.
        Thus by induction, at any time $t$, each node's expected expressed opinion is always equal to $\mu^{(0)}$.

    Now we are ready to prove Lemma \ref{lemma:correlation}.
    
    (a) In this part, we show that the opinion correlation $\Cor_M(f^{(\infty)}(v_i), f^{(\infty)}(v_j))$ is 
    equal to the probability that two coalescing random walks starting from $v_i$ and $v_j$ at time $t=\infty$ end at the same absorbing node in $V^\prime$.
    In the proof, we split the randomness $M$ into two parts: we use $O$ to denote the randomness of innate opinions which are generated from an i.i.d.\ distribution, and we use $E$ to denote the randomness from the opinion evolution. 
    
    When $i=j$, obviously we have $\c_{ij}=1$. 
    In this case, the two random walks' paths coincide, thus they are absorbed by the same node in $V^\prime$ with probability $1$. 
    
    When $i\neq j$, according to the definition of correlation,
    \begin{align*}
        \c_{ij}&=\Cor_M(f^{(\infty)}(v_i), f^{(\infty)}(v_j))\\
        &=\frac{\E_{M}[f^{(\infty)}(v_i)f^{(\infty)}(v_j)]-\E_{M}[f^{(\infty)}(v_i)]\E_{M}[f^{(\infty)}(v_j)]}
        {\sqrt{\Var_{M}[f^{(\infty)}(v_i)]\Var_{M}[f^{(\infty)}(v_j)]}}\\
        &=\frac{\E_{M}[f^{(\infty)}(v_i)f^{(\infty)}(v_j)]-(\mu^{(0)})^2}{\mu^{(0)}-(\mu^{(0)})^2}.\numberthis\label{eq: Cor and E}
    \end{align*}
    The third equality holds because for any $i\in[n]$,
    \begin{align*}
    \Var_{M}[f^{(\infty)}(v_i)]
    =\E_{M}[f^{(\infty)}(v_i)^2]-\E_{M}[f^{(\infty)}(v_i)]^2
    =\E_{M}[f^{(\infty)}(v_i)]-\E_{M}[f^{(\infty)}(v_i)]^2=\mu^{(0)}-(\mu^{(0)})^2.
    \end{align*}
    Next, we need to calculate $\E_{M}[f^{(\infty)}(v_i)f^{(\infty)}(v_j)]$, which is the probability that two random walkers starting from $v_i$ and $v_j$ walk to the nodes in $V^\prime$ whose original innate opinions are $1$. 
    This event consists of two cases: two random walkers move to the same absorbing node, or two distinct absorbing nodes. 
    Thus we can calculate $\E_{M}[f^{(\infty)}(v_i)f^{(\infty)}(v_j)]$ by adding them together.
    
    Let ${\cal M}_{i,j}^{p,q}$ be the event that in the coalescing random walks on $\overline G$, a random walker starting from $v_i$ is absorbed by $v_p^\prime$, while another random walker starting from $v_j$ is absorbed by $v_q^\prime$. Note that ${\cal M}_{i,j}^{p,q}$ is measurable under randomness $E$. It only depends on the structure of $\overline G$ and is independent of the initial value in $V'$:
    \begin{equation*}
        \Pr_E\left[{\cal M}_{i,j}^{p,q}\middle| f^{(0)}(v_1), f^{(0)}(v_2), \cdots, f^{(0)}(v_n)\right] =  \Pr_E\left[{\cal M}_{i,j}^{p,q}\right].
    \end{equation*}
    Thus $\E_{M}[f^{(\infty)}(v_i)f^{(\infty)}(v_j)]$ can be written as:
    \begin{align*}
        \E_{M}[f^{(\infty)}(v_i)f^{(\infty)}(v_j)]
        &=\E_{O,E}[f^{(\infty)}(v_i)f^{(\infty)}(v_j)]\\
        &=\sum_{p \neq q}\left(\Pr_{O,E}\left[{\cal M}_{i,j}^{p,q}\mid f^{(0)}(v_p) f^{(0)}(v_q) = 1\right]
         \Pr_{O,E}\left[f^{(0)}(v_p) f^{(0)}(v_q) = 1\right]\right)\\
        &\quad + \sum_{p=1}^n\Pr_{O,E}\left[{\cal M}_{i,j}^{p,p}\mid f^{(0)}(v_p) = 1\right]\Pr_{O,E}\left[f^{(0)}(v_p) = 1\right] \\
        &=\sum_{p \neq q}\Pr_{E}\left[{\cal M}_{i,j}^{p,q}\right]\Pr_{O}\left[f^{(0)}(v_p) f^{(0)}(v_q) = 1\right] 
        + \sum_{p=1}^n\Pr_{E}\left[{\cal M}_{i,j}^{p,p}\right]\Pr_{O}\left[f^{(0)}(v_p) = 1\right]\\
        &= (\mu^{(0)})^2 \sum_{p \neq q}\Pr_{E}\left[{\cal M}_{i,j}^{p,q}\right]+  \mu^{(0)}\sum_{p=1}^n\Pr_{E}\left[{\cal M}_{i,j}^{p,p}\right]\\
        &=\mu^{(0)}\sum_{p=1}^n\Pr_{E}\left[{\cal M}_{i,j}^{p,p}\right] + (\mu^{(0)})^2 \left(1-\sum_{p=1}^n\Pr_{E}\left[{\cal M}_{i,j}^{p,p}\right]\right) \\
        &=:\mu^{(0)} p_{same}(i,j)+(\mu^{(0)})^2(1-p_{same}(i,j)).
    \end{align*}
    In the last equation, we use $p_{same}(i,j)$ to denote the probability that two coalescing random walks starting from $v_i$ and $v_j$ end at the same node in $V^\prime$. 
    Thus
    \begin{align*}
        \c_{ij}
        &=\frac{\E_{M}[f^{(\infty)}(v_i)f^{(\infty)}(v_j)]-(\mu^{(0)})^2}{\mu^{(0)}-(\mu^{(0)})^2}\\
        &=\frac{\left[\mu^{(0)} p_{same}(i,j)+(\mu^{(0)})^2(1-p_{same}(i,j))\right]-(\mu^{(0)})^2}{\mu^{(0)}-(\mu^{(0)})^2}\\
        &=p_{same}(i,j).
    \end{align*}
    This means that the opinion correlation $\c_{ij}$ is 
    equal to the probability that two coalescing random walks starting from $v_i$ and $v_j$ end at the same absorbing node in $V^\prime$.
    
    (b) We now calculate $p_{same}$ in this part. 
    Let ${\cal H}_{ij}^k$ be the event that two coalescing random walks starting from $v_i$ and $v_j$ are both absorbed by node  $v_k^\prime$ without meeting each other at a node in $V$. 
    According to the definitions of events ${\cal M}_{i,j}^{p,q}$, ${\cal I}_{ij}^\ell$ and ${\cal H}_{ij}^k$, we have
    \begin{equation}\label{eq:1}
        \Pr_E\left[{\cal M}_{i,j}^{k,k}\right]=\sum_{\ell=1}^n\Pr\left[{\cal I}_{ij}^\ell\right]Q_{\ell k}+\Pr\left[{\cal H}_{ij}^k\right]
    \end{equation}
    where $Q_{\ell k}$ 
    is the probability
    that a random walker starting from node $v_\ell$ at time 
    ends at $v_k^\prime \in V'$.
    
    
    Notice that $Q_{ik}Q_{jk}$ represents the probability that two {\em non}-coalescing random walks starting from $v_i$ and $v_j$ end at node $v_k^\prime$, thus it can be written as
    \begin{equation}\label{eq:2}
        Q_{ik}Q_{jk}=\sum_{\ell=1}^n\Pr\left[{\cal I}_{ij}^\ell\right]Q_{\ell k}^2+\Pr\left[{\cal H}_{ij}^k\right].
    \end{equation}
    Combining Eq. (\ref{eq:1}) and (\ref{eq:2}), 
    \begin{equation}\label{eq:3}
        \Pr_E\left[{\cal M}_{i,j}^{k,k}\right] = \sum_{\ell=1}^n\Pr\left[{\cal I}_{ij}^\ell\right]\left(Q_{\ell k}-Q_{\ell k}^2\right)+Q_{ik}Q_{jk}.
    \end{equation}
    Therefore,
    \begin{align*}
        \c_{ij}
        &=p_{\it same}(i,j)=\sum_{k=1}^n \Pr_E\left[{\cal M}_{i,j}^{k,k}\right] \\
        &=\sum_{k=1}^n \left(\sum_{\ell=1}^n\Pr\left[{\cal I}_{ij}^\ell\right]\left(Q_{\ell k}-Q_{\ell k}^2\right)+Q_{ik}Q_{jk}\right)\\
        &=\sum_{\ell=1}^n \Pr\left[{\cal I}_{ij}^\ell\right] \left(1-\sum_{k=1}^n Q_{\ell k}^2\right)+\sum_{k=1}^n Q_{ik}Q_{jk}.
    \end{align*}
    This finishes the proof.
\end{proof}

\similarity*
\begin{proof}
    Recall Eq. (\ref{eq: E[fifj]}) and Eq. (\ref{eq:expectation}),
    \begin{align*}
        \E_M[f^{(\infty)}(v_i) f^{(\infty)}(v_j)]
        &=\frac{1}{2}(\s_{ij}+\E_M[f^{(\infty)}(v_i)]+\E_M[f^{(\infty)}(v_j)]-1)\\
        &=\frac{1}{2}(\s_{ij}-1)+\mu^{(0)},
    \end{align*}
    and Eq.~(\ref{eq: Cor and E}),
    \begin{align*}
        \E_M[f^{(\infty)}(v_i) f^{(\infty)}(v_j)]
        =\mu^{(0)}(1-\mu^{(0)})\c_{ij}+(\mu^{(0)})^2.
    \end{align*}
    Thus we can obtain
    \begin{align*}
        \s_{ij}=1-2\mu^{(0)}(1-\mu^{(0)})(1-\c_{ij}).
    \end{align*}
    This finishes the proof.
\end{proof}
    \section{Extensions of the VIO Model}\label{section: more on vio}
In this section, we extend the VIO model to allow (a) non-i.i.d. distributions of the innate opinions, and (b) negative edges as in
the signed voter model \cite{Li0WZ15}. We provide the analysis of opinion similarities of these two extended models.
	\subsection{\model{} Model with Non-Identical Innate Opinion Distribution}\label{section: non iid}
	
	In the \model{} model (section \ref{section: opinion evolution model and opinion similarity}), we regard the individuals' innate opinions as generated from an i.i.d.\ distribution.
    In this section,
	we still assume that the individuals' innate opinions are independent, but each person $v_i$ can have her own expected innate opinion $\E[f^{(0)}(v_i)]=\mu_i^{(0)}$.
    In order to distinguish the source of the randomness, we split the randomness $M$ into two parts: we use $O$ to denote the randomness of innate opinions which are generated by the Bernoulli distribution, and use $E$ to denote the randomness from the opinion evolution (These notations were first introduced in the proof of Lemma \ref{lemma:correlation}).
	Thus $\E[f^{(0)}(v_i)]=\mu_i^{(0)}$ is clarified as $\E_O[f^{(0)}(v_i)]=\mu_i^{(0)}$.
    
	We now give a sketchy analysis of the \model{} model with non-identical innate opinion distribution.
	We still use notation $\e_i$ to denote $v_i$'s expected expressed opinion when the evolution converges, that is $\e_i=\E_{M}[f^{(\infty)}(v_i)]$.
    Notice that the computation method of $Q$ and $\{\Pr[\mathcal{I}_{ij}^\ell]\}$'s (Lemma \ref{lemma:walkparam}) still holds for the \model{} model with non-identical innate opinion distribution.
	
    According to the definition of matrix $Q$ (Definition \ref{def:walkparam}), it is easy to verify that for any $i\in[n]$,
    \begin{align*}
        \E_E[f^{(\infty)}(v_i)]=\sum_{j=1}^n Q_{ij} f^{(0)}(v_j).
    \end{align*}
    Thus
    \begin{align}\label{eq: non-iid expectation}
        \e_i=\E_O\E_E[f^{(\infty)}(v_i)]
        =\sum_{j=1}^n Q_{ij} \E_O [f^{(0)}(v_j)]=\sum_{j=1}^n Q_{ij} \mu_j^{(0)}.
    \end{align}
    
    Next, we focus on computing the opinion similarity between any pair of nodes $(v_i,v_j)$, 
        which is the probability that two random walkers starting from $v_i$ and $v_j$ have the same final expressed opinions.
    Recall Eq.~(\ref{eq: E[fifj]}) that
    \begin{align*}
        \s_{ij}&=2\E_M[f^{(\infty)}(v_i)f^{(\infty)}(v_j)]+1 -\e_i-\e_j.
    \end{align*}
    In order to obtain $\s_{ij}$, we compute $\E_M[f^{(\infty)}(v_i)f^{(\infty)}(v_j)]$, 
    which is the probability that two random walkers starting from $v_i$ and $v_j$ finally walk to the nodes in $V'$ whose innate opinions are equal to one.
    
    Following the notations used in the proof of Lemma \ref{lemma:correlation}, let ${\cal M}_{i,j}^{p,q}$ be the event that in the coalescing random walks on $\overline G$, a random walker starting from $v_i$ is absorbed by $v_p^\prime$, while another random walker starting from $v_j$ is absorbed by $v_q^\prime$. Note that ${\cal M}_{i,j}^{p,q}$ is measurable under randomness $E$. 
    Thus
    \begin{align*}
        \E_M[f^{(\infty)}(v_i)f^{(\infty)}(v_j)]
        &=\Pr_M[f^{(\infty)}(v_i)=f^{(\infty)}(v_j)=1]\\
        &=\sum_{p=1}^n \Pr_E[\mathcal{M}_{i,j}^{p,p}]\Pr_O[f^{(0)}(v_p)=1]
       +\sum_{p\neq q} \Pr_E[\mathcal{M}_{i,j}^{p,q}]\Pr_O[f^{(0)}(v_p)=1,f^{(0)}(v_q)=1]\\
        &=\sum_{p=1}^n \Pr_E[\mathcal{M}_{i,j}^{p,p}]\mu_p^{(0)}
         +\sum_{p\neq q} \Pr_E[\mathcal{M}_{i,j}^{p,q}]\mu_p^{(0)}\mu_q^{(0)}.
    \end{align*}
    Recall Eq (\ref{eq:3}),
    \begin{align*}
        \Pr_E\left[{\cal M}_{i,j}^{p,p}\right] = \sum_{\ell=1}^n\Pr\left[{\cal I}_{ij}^\ell\right]\left(Q_{\ell p}-Q_{\ell p}^2\right)+Q_{ip}Q_{jp},
    \end{align*}
     and use the same technique of Eq (\ref{eq:2}),
     \begin{align*}
         \Pr_E\left[{\cal M}_{i,j}^{p,q}\right]= Q_{ip}Q_{jq}-\sum_{\ell=1}^n\Pr\left[{\cal I}_{ij}^\ell\right]Q_{\ell p}Q_{\ell q}.
     \end{align*}
     Therefore, after some calculations,
     \begin{align*}
         \E_M[f^{(\infty)}(v_i)f^{(\infty)}(v_j)]
         &=\sum_{p=1}^n \Pr_E[\mathcal{M}_{i,j}^{p,p}]\mu_p^{(0)}
         +\sum_{p\neq q} \Pr_E[\mathcal{M}_{i,j}^{p,q}]\mu_p^{(0)}\mu_q^{(0)}\\
         &=\mu_i\mu_j+\sum_{p=1}^n \mu_p^{(0)}(1-\mu_p^{(0)})Q_{ip}Q_{jp}\\
         &\quad +\sum_{\ell=1}^{n} \Pr[\mathcal{I}_{ij}^\ell]\left(\mu_\ell(1-\mu_\ell)-\sum_{p=1}^n \mu_p^{(0)}(1-\mu_p^{(0)})Q_{\ell p}^2 \right).
     \end{align*}
     Thus
     \begin{align*}
         \s_{ij}&=(1-\mu_i)(1-\mu_j)+\mu_i\mu_j+2\sum_{p=1}^n \mu_p^{(0)}(1-\mu_p^{(0)})Q_{ip}Q_{jp}
          +2\sum_{\ell=1}^{n} \Pr[\mathcal{I}_{ij}^\ell]\left(\mu_\ell(1-\mu_\ell)-\sum_{p=1}^n \mu_p^{(0)}(1-\mu_p^{(0)})Q_{\ell p}^2 \right)
     \end{align*}
     where $\e_i$ is calculated by $\sum_{j=1}^n Q_{ij} \mu_j^{(0)}$ as shown in Eq (\ref{eq: non-iid expectation}).
    With opinion similarities calculated, then we can use partitioned sampling method described in Section \ref{section: solving the OPS problem} to do efficient sampling.
\subsection{Signed \model{} model}\label{section: signed}

In this section, we provide a sketchy analysis of the signed \model{} model, which allows negative edges in the graph.
	Given a weighted directed social graph $G=(V,A)$ where $A$ is the weighted adjacency matrix with $A_{ij}\neq 0$ if and only if edge $(v_i,v_j)$ exists, with $A_{ij}\in \mathbb{R}^+\cup\mathbb{R}^-$ as the weight of edge $(v_i,v_j)$.
	At initial time $t=0$, each node $v_i$ generates its innate opinion $f^{(0)}(v_i)$ from an i.i.d.\ distribution
	with mean $\mu^{(0)}$.
    When $v_i$'s Poisson arrives, it sets its expressed opinion
	$f^{(t)}(v_i)$ to be its own innate opinion $f^{(0)}(v_i)$
	with an inward probability $p_i$, or with probability $1-p_i$ 
	node $v_i$ randomly selects a neighbor $v_j$ of $v_i$ with probability
	proportional to absolute value of the weight of the edge $(v_i,v_j)$, i.e., with
	probability $(1-p_i)|A_{ij}|/\sum_{k=1}^{n} |A_{ik}|$, and sets $f^{(t)}(v_i)$ to $v_j$'s expressed opinion $f^{(t)}(v_j)$ when $A_{ij}>0$ or the opposite of $v_j$'s expressed opinion $1-f^{(t)}(v_j)$ when $A_{ij}<0$.
	Similar to the \model{} model, when $p_i > 0$ for all $i\in [n]$, the signed \model{} model has a unique steady state distribution for the final expressed opinions. 
    For simplifying the notation, we use $f_i$ (and $f_i(t)$) to denote $f^{(\infty)}(v_i)$ (and $f^{(t)}(v_i)$).
    We define $f'_i$ (and $f'_i(t)$) to be $2f_i -1$ (and $2f_i(t)-1$).
    Thus $f'_i(t)=1$ if $f_i(t)=1$, and $f'_i(t)=-1$ if $f_i(t)=0$.
    We list all the notations used in this section in Table \ref{table: notations}. 
    In order to distinguish the source of the randomness, we split the randomness $M$ into two parts: we use $O$ to denote the randomness of innate opinions which are generated by the i.i.d.\ Bernoulli distribution, and use $E$ to denote the randomness from the opinion evolution (These notations are first introduced in the proof of Lemma \ref{lemma:correlation}).
	
	\begin{table*}[t]
        \centering
		\begin{tabular}{ | c | m{9.5cm} |}
			\hline
			Notation & Representation \\ \hline
			$A^+$, $A^-$, $\bar A$ & $A^+$ (resp. $A^-$) is the non-negative adjacency matrix representing positive (resp. negative) edges of $G$, with $A = A^+ - A^-$ and $\bar A = A^+ + A^-$. \\ \hline
			$Q^+$, $Q^-$ & The $(i,j)$-entry of $Q^+$ (resp. $Q^-$) is the probability that the random walk starting from $v_i$ is absorbed by $v'_j$ in $V'$ while the number of walking steps is even (resp. odd). \\ \hline
			$p_{\it same}^+(i,j)$, $p_{\it same}^-(i,j)$ & $p_{\it same}^+(i,j)$ (resp. $p_{\it same}^-(i,j)$) is the probability that two random coalescing walks starting from $v_i$ and $v_j$ are absorbed by the same node in $V'$ while the summation of two walks' steps is even (resp. odd). \\ \hline
			$p_{\it diff}^+(i,j)$, $p_{\it diff}^-(i,j)$ & $p_{\it diff}^+(i,j)$ (resp. $p_{\it diff}^-(i,j)$) is the probability that two random coalescing walks starting from $v_i$ and $v_j$ are absorbed by different nodes in $V'$ while the summation of two walks' steps is even (resp. odd). \\ \hline
			$f_i(t)$, $f'_i(t)$, $f_i$, $f'_i$ & $f_i(t)$ and $f'_i(t)$ are both representing $v_i$'s expressed opinion at time $t$. The difference is $f_i(t)\in\{0,1\}$ and $f'_i(t)\in\{-1,1\}$. $f'_i(t)$ can be obtained by $2f_i(t)-1$. $f_i$ and $f'_i$ represent $v_i$'s final expressed opinion ($t\rightarrow\infty$).\\ \hline
			${\cal I}_{ij}^{l+}$, ${\cal I}_{ij}^{l-}$ & ${\cal I}_{ij}^{l+}$ (resp. ${\cal I}_{ij}^{l-}$) is the event that two random walks starting from $v_i$ and $v_j$ eventually meet and the first node they meet at is $v_l\in V$ while the summation of the steps they have taken is even (resp. odd).\\ \hline
			${\cal H}_{ij}^{k+}$, ${\cal H}_{ij}^{k-}$ & ${\cal H}_{ij}^{k+}$ (resp. ${\cal H}_{ij}^{k-}$) is the event that two coalescing random walks starting from $v_i$ and $v_j$ are both absorbed by node  $v_k^\prime$ without meeting each other at a node in $V$ while the summation of two walks' steps is even (resp. odd).\\ \hline
		\end{tabular}
        \caption{Notations of the signed VIO model}\label{table: notations}
	\end{table*}
    
    The expected final expressed opinions can be calculated by (similar to the computation of $Q$ in the \model{} model)
    \begin{align*}
        \E_E[\vec f'] &= \sum_{k=0}^\infty ((I-P)D^{-1}A)^k P \vec f'(0)
        =(I-(I-P)D^{-1}A)^{-1} P \vec f'(0)
    \end{align*}
    where $\vec f'=(f_1',\cdots,f_n')^T$ and $\vec f'(0)=(f_1'(0),\cdots,f_n'(0))^T$.
    Notice that the $\E_E[\vec f']$ can also be written as  $(Q^+-Q^-)\vec f'(0)$ for any $\vec f'(0)$, so we have 
    $$Q^+-Q^-=(I-(I-P)D^{-1}A)^{-1}P.$$
    In the analysis of the unsigned \model{} model, we have
    $$Q^++Q^-=(I-(I-P)D^{-1}\bar A)^{-1}P.$$
    Therefore, we can obtain
    $$Q^+=(I-(I-P)D^{-1} A^+)^{-1}P,$$
    and $$Q^-=(I-(I-P)D^{-1} A^-)^{-1}P.$$

    Notice that for any $i\in[n]$, 
    \begin{equation*}
        \mu_i=\E_M[f_i]=(\E_M[f_i']+1)/2.
    \end{equation*}
    Recall Eq (\ref{eq: E[fifj]}) that
    \begin{align*}
        \s_{ij}&=2\E_M[f_i f_j]+1 -\e_i-\e_j \\
        &=2\E_M[\frac{1+f_i'}{2}\frac{1+f_j'}{2}]+1-\frac{1+\E_M[f_i']}{2}-\frac{1+\E_M[f_j']}{2}\\
        &=\frac{1}{2}\E_M[f_i' f_j']+\frac{1}{2}.\numberthis\label{eq: s1}
    \end{align*}
    Next, we focus on calculating $\E_M[f'_i f'_j]$. 
    We consider the following two cases: a) two coalescing random walkers from $v_i$ and $v_j$ walking to the same node in $V'$, and b) two coalescing random walkers from $v_i$ and $v_j$ walking to different nodes in $V'$. 
    Notice that if two random walkers end at the same node $v_k'\in V'$, thus $\E_O[f'_k f'_k]=1$ for any $k\in [n]$; and if two random walkers end at the different nodes  $v_k'$ and $v_l'$, thus $\E_O[f'_k f'_l]=(2\mu^{(0)}-1)^2$ for any $k,l\in [n]$.
    Therefore,
    \begin{align*}
        \E_M[f'_i f'_j]
        =\left[p_{\it same}^+(i,j)-p_{\it same}^-(i,j)\right]\cdot 1
        +\left[p_{\it diff}^+(i,j)-p_{\it diff}^-(i,j)\right]\cdot (2\mu^{(0)}-1)^2.\numberthis\label{eq: s2}
    \end{align*}    
    
    Next, we calculate $p_{\it same}^+(i,j)-p_{\it same}^-(i,j)$ and $p_{\it diff}^+(i,j)-p_{\it diff}^-(i,j)$ separately,
    similar to the proof of Lemma \ref{lemma:correlation}.
    
    \quad\\
    (a) $p_{\it same}^+(i,j)-p_{\it same}^-(i,j)$.
    
    If two walkers starting from $v_i$ and $v_j$ walk to the same node in $V'$, there are two cases: one is they meet at some node $v_l\in V$ and then walk together until being absorbed; the other is they do not meet before they end their walks.
    Thus we have
    \begin{align*}
        p_{\it same}^+(i,j)=\sum_{k=1}^n\sum_{l=1}^n\Pr[\mathcal{I}_{ij}^{l+}](Q_{lk}^++Q_{lk}^-)+\sum_{k=1}^n\Pr[\mathcal{H}_{ij}^{k+}],
    \end{align*}
    and
    \begin{align*}
    p_{\it same}^-(i,j)=\sum_{k=1}^n\sum_{l=1}^n\Pr[\mathcal{I}_{ij}^{l-}](Q_{lk}^++Q_{lk}^-)+\sum_{k=1}^n\Pr[\mathcal{H}_{ij}^{k-}].
    \end{align*}
    Consider two non-coalescing random walks starting from $v_i$ and $v_j$ are absorbed by the same node  $v_k'\in V'$ while the summation of two walks' steps is even:
    \begin{align*}
        Q_{ik}^+ Q_{jk}^+ + Q_{ik}^- Q_{jk}^- 
        =& \sum_{l=1}^n \Pr[\mathcal I_{ij}^{l+}]\left({Q_{lk}^+}^2+{Q_{lk}^-}^2\right)
        +\sum_{l=1}^n \Pr[\mathcal I_{ij}^{l-}] \cdot 2 Q_{lk}^+Q_{lk}^-
        +\Pr[\mathcal{H}_{ij}^{k+}].
    \end{align*}
    Consider two non-coalescing random walks starting from $v_i$ and $v_j$ are absorbed by the same node  $v_k'\in V'$ while the summation of two walks' steps is odd:
    \begin{align*}
        Q_{ik}^+ Q_{jk}^- + Q_{ik}^- Q_{jk}^+ 
        =& \sum_{l=1}^n \Pr[\mathcal I_{ij}^{l-}]\left({Q_{lk}^+}^2+{Q_{lk}^-}^2\right)
        +\sum_{l=1}^n \Pr[\mathcal I_{ij}^{l+}] \cdot 2 Q_{lk}^+Q_{lk}^-
        +\Pr[\mathcal{H}_{ij}^{k-}].
    \end{align*}
    Thus
    \begin{align*}
        p_{\it same}^+(i,j)-p_{\it same}^-(i,j)
        = \sum_{k=1}^n\sum_{l=1}^n \left( \Pr[\mathcal I_{ij}^{l+}] - \Pr[\mathcal I_{ij}^{l-}] \right) \left[ 1- (Q_{lk}^+ - Q_{lk}^-)^2 \right]
          + \sum_{k=1}^n (Q_{ik}^+ -Q_{ik}^-)(Q_{jk}^+ - Q_{jk}^-).\numberthis\label{eq: s3}
    \end{align*}
    
    \quad\\
    (b) $p_{\it diff}^+(i,j)-p_{\it diff}^-(i,j)$.
    
    Consider two non-coalescing random walks starting from $v_i$ and $v_j$ are absorbed by different nodes in $V'$ while the summation of two walks' steps is even:
    \begin{align*}
        \sum_{a \neq b}Q_{ia}^+Q_{jb}^+ + Q_{ia}^-Q_{jb}^-
        =\sum_{a \neq b}\sum_{l=1}^n \Pr[\mathcal I_{ij}^{l+}](Q_{la}^+Q_{lb}^+ + Q_{la}^-Q_{lb}^-)
        +\sum_{a \neq b}\sum_{l=1}^n \Pr[\mathcal I_{ij}^{l-}](Q_{la}^+Q_{lb}^- + Q_{la}^-Q_{lb}^+)
        +p_{\it diff}^+(i,j).
    \end{align*}
    Consider two non-coalescing random walks starting from $v_i$ and $v_j$ are absorbed by different nodes in $V'$ while the summation of two walks' steps is odd:
    \begin{align*}
        \sum_{a \neq b}Q_{ia}^+Q_{jb}^- + Q_{ia}^-Q_{jb}^+
        =\sum_{a \neq b}\sum_{l=1}^n \Pr[\mathcal I_{ij}^{l+}](Q_{la}^+Q_{lb}^- + Q_{la}^-Q_{lb}^+)
        +\sum_{a \neq b}\sum_{l=1}^n \Pr[\mathcal I_{ij}^{l-}](Q_{la}^+Q_{lb}^+ + Q_{la}^-Q_{lb}^-)
        +p_{\it diff}^-(i,j).
    \end{align*}
    Thus
    \begin{align*}
        p_{\it diff}^+(i,j)-p_{\it diff}^-(i,j)
        =\sum_{a \neq b} (Q_{ia}^+-Q_{ia}^-)(Q_{jb}^+-Q_{jb}^-)
        -\sum_{a \neq b} \left( \Pr[\mathcal I_{ij}^{l+}] 
        - \Pr[\mathcal I_{ij}^{l-}] \right) (Q_{la}^+-Q_{la}^-)(Q_{lb}^+-Q_{lb}^-).\numberthis\label{eq: s4}
    \end{align*}
    
    In order to obtain $p_{\it same}^+(i,j)-p_{\it same}^-(i,j)$ and $p_{\it diff}^+(i,j)-p_{\it diff}^-(i,j)$ by Eq (\ref{eq: s3}, \ref{eq: s4}), we need to compute $\Pr[\mathcal I_{ij}^{l+}]$ and $\Pr[\mathcal I_{ij}^{l-}]$.
    Let $\Pr[\mathcal I_{ij}^{l}]=\Pr[\mathcal I_{ij}^{l+}] + \Pr[\mathcal I_{ij}^{l-}]$, $\Delta[\mathcal I_{ij}^{l}]=\Pr[\mathcal I_{ij}^{l+}] - \Pr[\mathcal I_{ij}^{l-}]$.
    
    Similar to the proof of Lemma \ref{lemma:walkparam}, we have
    \begin{align*}
        \Pr[\mathcal I_{ij}^{l+}] =& \sum_{a=1}^n \frac{\lambda_i (1-p_i) |A_{ia}|}{2 (\lambda_i+\lambda_j)d_i } \left( \Pr[\mathcal I_{aj}^l] + {E_{ia}} \Delta[\mathcal I_{aj}^{l}] \right) 
         +\sum_{b=1}^n \frac{\lambda_j (1-p_j) |A_{jb}|}{2 (\lambda_i+\lambda_j)d_j } \left( \Pr[\mathcal I_{ib}^l] + {E_{jb}} \Delta[\mathcal I_{ib}^{l}] \right),
    \end{align*}
    and
    \begin{align*}
        \Pr[\mathcal I_{ij}^{l-}] =& \sum_{a=1}^n \frac{\lambda_i (1-p_i) |A_{ia}|}{2 (\lambda_i+\lambda_j)d_i } \left( \Pr[\mathcal I_{aj}^l] - {E_{ia}} \Delta[\mathcal I_{aj}^{l}] \right) 
         +\sum_{b=1}^n \frac{\lambda_j (1-p_j) |A_{jb}|}{2 (\lambda_i+\lambda_j)d_j } \left( \Pr[\mathcal I_{ib}^l] - {E_{jb}} \Delta[\mathcal I_{ib}^{l}] \right).
    \end{align*}
    where 
    \begin{align*}
        E_{ij}=\begin{cases}
            1, & A_{ij}>0; \\
            -1, & A_{ij}<0; \\
            0, & A_{ij}=0.
        \end{cases}
    \end{align*}
    Thus, we can get the recursive equations of $\Pr[\mathcal I_{ij}^{l}]$ and $\Delta[\mathcal I_{ij}^{l}]$:
    \begin{align*}
    \Pr[\mathcal I_{ij}^{l}] =& \sum_{a=1}^n \frac{\lambda_i (1-p_i) |A_{ia}|}{ (\lambda_i+\lambda_j)d_i }  \Pr[\mathcal I_{aj}^l]  
     +\sum_{b=1}^n \frac{\lambda_j (1-p_j) |A_{jb}|}{ (\lambda_i+\lambda_j)d_j }  \Pr[\mathcal I_{ib}^l],
    \end{align*}
    and
    \begin{align*}
    \Delta[\mathcal I_{ij}^{l}] =& \sum_{a=1}^n \frac{\lambda_i (1-p_i) |A_{ia}|}{ (\lambda_i+\lambda_j)d_i } E_{ia} \Delta[\mathcal I_{aj}^l]  
     +\sum_{b=1}^n \frac{\lambda_j (1-p_j) |A_{jb}|}{ (\lambda_i+\lambda_j)d_j } E_{jb} \Delta[\mathcal I_{ib}^l].
    \end{align*}
    
    After solving the above two recursive equations, we get $\Pr[\mathcal I_{ij}^{l}]$ and $\Delta[\mathcal I_{ij}^{l}]$,
    and then we can get $\Pr[\mathcal I_{ij}^{l+}]= \frac{\Pr[\mathcal I_{ij}^{l}] + \Delta[\mathcal I_{ij}^{l}]}{2}$ and $\Pr[\mathcal I_{ij}^{l-}]= \frac{\Pr[\mathcal I_{ij}^{l}] - \Delta[\mathcal I_{ij}^{l}]}{2}$.
    
    Above all, we can get opinion similarities between any pair of nodes by Equation (\ref{eq: s1}, \ref{eq: s2}, \ref{eq: s3}, \ref{eq: s4}).

    \endgroup
}

\end{document}